\tikzstyle{small}=[font=\footnotesize]
\tikzset{
    every picture/.style={>=stealth,auto,node distance=5cm},
    label/.style={font=\scriptsize},
    mgnode/.style={font=\boldmath},
}
\newcommand{\Act}{\mathit{Act}}
\newcommand{\ignore}[1]{}
\newcommand{\PSPACE}{\textsc{PSpace}}
\newcommand{\N}{\mathbb{N}}
\newcommand{\Z}{\mathbb{Z}}
\newcommand{\x}{\times}
\newcommand{\G}{\mathcal{G}}
\newcommand{\mbsim}{\sim}
\newcommand{\mwbsim}{\approx}
\newcommand{\norm}[1]{\|#1\|}
\newcommand{\norminf}[1]{\norm{#1}_{\infty}}
\newcommand{\RCTL}{\text{RCTL}}
\newcommand{\CTL}{\text{CTL}}
\newcommand{\EF}{\text{EF}}
\newcommand{\EG}{\text{EG}}
\newcommand{\OP}[1]{\textit{#1}}
\newcommand{\CTLS}{\text{CTL*}}
\newcommand{\EGCCTLS}{\text{EGCCTL*}}
\newcommand{\AGCCTLS}{\text{AGCCTL*}}
\newcommand{\LTL}{\text{LTL}}
\newcommand{\Con}[1]{\mathcal{#1}}
\newcommand{\Pre}[2]{Pre_{#1}(#2)}
\newcommand{\DEN}[1]{\llbracket#1\rrbracket}
\newcommand{\exnext}[1]{\langle#1\rangle}
\newcommand{\Var}{{\it Var}}
\newcommand{\Val}{{\it Val}}
\newcommand{\Const}{{\it Const}}
\newcommand{\Sat}{{\it Sat}}
\newcommand{\Proj}{{\it Proj}}
\newlength{\minlen}
\newlength{\minwlen}
\newlength{\arrowlen}
\newlength{\inputlen}
\newcommand{\Xrightarrow}[4][]{
  \ifthenelse { \equal {#1} {} }
    { }   
    { \settowidth{\minlen}{$\xrightarrow{#1}$} }  
  \settowidth{\inputlen}{$\xrightarrow{#2}$}
  \ifthenelse{\lengthtest{\the\minlen>\the\inputlen}}
           {\setlength{\arrowlen}{\minlen}}
           {\setlength{\arrowlen}{\inputlen}}
  \mathrel{\xrightarrow{\mathmakebox[\arrowlen]{#2}}\!\!{}^{\scriptstyle{#3}}_{\scriptstyle{#4}}}
  }
\newcommand{\XRightarrow}[4][]{
  \ifthenelse { \equal {#1} {} }
    { }   
    { \settowidth{\minwlen}{$\xRightarrow{#1}$} }  
  \settowidth{\inputlen}{$\xRightarrow{#2}$}
  \ifthenelse{\lengthtest{\the\minwlen>\the\inputlen}}
           {\setlength{\arrowlen}{\minwlen}}
           {\setlength{\arrowlen}{\inputlen}}
  \mathrel{\xRightarrow{\mathmakebox[\arrowlen]{#2}}\!\!{}^{\scriptstyle{#3}}_{\scriptstyle{#4}}}
  }
\newcommand{\notXrightarrow}[4][]{
  \ifthenelse { \equal {#1} {} }
    { }   
    { \settowidth{\minlen}{$\xRightarrow{#1}$} }  
  \settowidth{\inputlen}{$\xrightarrow{#2}$}
  \ifthenelse{\lengthtest{\the\minlen>\the\inputlen}}
           {\setlength{\arrowlen}{\minlen}}
           {\setlength{\arrowlen}{\inputlen}}
  \mathrel{\centernot{\xrightarrow{\mathmakebox[\arrowlen]{#2}}}\!\!{}^{\scriptstyle{#3}}_{\scriptstyle{#4}}}
  }
\newcommand{\notXRightarrow}[4][]{
  \ifthenelse { \equal {#1} {} }
    { }   
    { \settowidth{\minwlen}{$\xRightarrow{#1}$} }  
  \settowidth{\inputlen}{$\xRightarrow{#2}$}
  \ifthenelse{\lengthtest{\the\minwlen>\the\inputlen}}
           {\setlength{\arrowlen}{\minwlen}}
           {\setlength{\arrowlen}{\inputlen}}
  \mathrel{\centernot{\xRightarrow{\mathmakebox[\arrowlen]{#2}}}\!\!{}^{\scriptstyle{#3}}_{\scriptstyle{#4}}}
  }
\newcommand{\Step}[4][]{\Xrightarrow[#1]{#2}{#3}{#4}}
\newcommand{\step}[2][]{\Step[#1]{#2}{}{}}
\newcommand{\WStep}[4][]{\XRightarrow[#1]{#2}{#3}{#4}}
\newcommand{\wstep}[2][]{\WStep[#1]{#2}{}{}}
\newcommand{\qed}{}
\title{Branching-Time Model Checking Gap-Order Constraint Systems (Extended
    Version)}
\author{Richard Mayr\\ University of Edinburgh, UK
        \and
        Patrick Totzke\\ University of Warwick, UK
        }
\begin{document}
    \maketitle

    \begin{abstract}
We consider the model checking problem for Gap-order Constraint 
Systems (GCS) w.r.t.\ the branching-time temporal logic CTL, 
and in particular its fragments \EG\ and \EF.
GCS are nondeterministic infinitely branching processes 
described by evolutions of
integer-valued variables, subject to Presburger constraints
of the form $x-y\ge k$, where $x$ and $y$ are variables or constants 
and $k\in\N$ is a non-negative constant.
We show that \EG\ model checking is undecidable for GCS, while
\EF\ is decidable. In particular, this implies the 
decidability of strong and weak bisimulation equivalence between GCS 
and finite-state systems.

    \end{abstract}
   
    \section{Introduction}
Counter machines \cite{Min1967} extend a finite control-structure
with unbounded memory in the form of counters that can hold 
arbitrarily large integers (or natural numbers), and thus resemble basic programming
languages.
However, almost all behavioral properties, e.g., reachability and termination,
are undecidable for counter machines with two or more counters \cite{Min1967}.
For the purpose of formal software verification, various formalisms have been
defined that approximate counter machines and still retain the decidability
of some properties. E.g., Petri nets model weaker counters that cannot be 
tested for zero, and have a decidable reachability problem \cite{May1984}.

Gap-order constraint systems \cite{Rev1993,FR1996,Boz2012,BP2012} are another model that
approximates the behavior of counter machines. 
They are nondeterministic infinitely branching processes 
described by evolutions of
integer-valued variables, subject to Presburger constraints
of the form $x-y\ge k$, where $x$ and $y$ are variables or constants 
and $k\in\N$ is a non-negative constant.
Unlike in Petri nets,
the counters can be tested for zero, but computation steps still have a certain
type of monotonicity that yields a decidable reachability problem.
In fact, control-state reachability is decidable even for the more
general class of constrained multiset rewriting systems \cite{AD2006}.

\paragraph{Previous work.}
Beyond reachability, several model checking problems have been 
studied for GCS and related formalisms.
The paper \cite{Cer1994} studies Integral Relational Automata (IRA),
a model that is subsumed by 
GCS, that allows only constraints of the form $x\ge y$ or $x > y$,
where $y$ and $x$ are variables or constants.
It is shown that \CTL\ model checking of IRA is undecidable,
even for the restriction \RCTL, that forbids next-state modalities.
In contrast, model checking IRA remains decidable for
the existential and universal fragments of \CTLS.
Models of equal expressivity include the
monotonicity constraint systems (MCS) of \cite{Ben2009}
and $(\Z,<,=)$-automata in \cite{DD2007}.
Demri and D'Souza \cite{DD2007} show that satisfiability
and model checking 
\LTL\ is decidable and PSPACE-complete.

Bozzelli and Pinchinat \cite{Boz2012,BP2012}
study the more general model of
gap-order constraint systems (GCS), which
strictly extend the models mentioned above.
They show that model checking GCS is decidable and PSPACE-complete for the logic \EGCCTLS,
but undecidable for \AGCCTLS, which are the existential and universal fragments
of \CTLS, respectively, extended with gap constraints as atomic propositions.
Moreover, satisfiability is PSPACE-complete for both these fragments.
\EGCCTLS\ and  \AGCCTLS\ are not dual, 
since gap constraints are not closed under negation. 
Moreover, they are orthogonal to the fragments \EF\ and \EG\ considered in this paper, 
which allow nesting of negation and the operator $\OP{EF}$ (resp. $\OP{EG}$).
Checking fairness
(the existence of infinite runs
where a variable has a fixed value infinitely often)
and thus termination (the non-existence of infinite runs),
and also strong termination (the existence of a bound on the length
of all runs)
are decidable in polynomial space \cite{BP2012,Boz2012}.
An important ingredient for these results
are effectively constructible under-approximations
of the set of GCS runs induced by a given
sequence of transitions,
which preserve enabledness (Thm.~2 in \cite{BP2012}).
This comes at the cost of losing information
about the induced runs.
In particular,
it is impossible to recover (a representation of) the exact set of runs induced
by a sequence of transitions from its approximation.

\paragraph{Our contribution.}
We study the decidability of model checking problems for GCS
with fragments of computation-tree logic (CTL), namely \EG\ and \EF\
(see e.g.~\cite{Esp1997}).

We first show that \EG-model checking is undecidable, even for the weaker model of IRA \cite{Cer1994}.
On the other hand, model checking GCS with respect to
\EF\ remains decidable.
This positive result is based on the observation that
one can use boolean combinations of
gap constraints to represent the sets of variable valuations satisfying a given
\EF\ formula, and that
such a representation
can be effectively computed
in a bottom-up fashion.
An immediate consequence 
is that
checking strong and weak bisimulation equivalence is decidable between GCS and finite-state systems.

    \section{Gap-Order Constraint Systems}
Let $\Z$ and $\N$ denote the sets of integers and non-negative integers.
A \emph{labeled transition system} (LTS) is described by a triple
$T=(V,\Act,\step{})$ where $V$ is a (possibly infinite) set of
states, $\Act$ is a finite set of action labels and $\longrightarrow\,\subseteq V\x
\Act\x V$ is the labeled transition relation.
We 
use the infix notation $s\step{a}s'$ for a
transition $(s,a,s')\in\ \step{}$, in which case we say $T$ makes an \emph{$a$-step} from $s$ to
$s'$. 
For a set $S\subseteq V$ of states and $a\in\Act$ we define the set of $a$-predecessors by
$\Pre{a}{S} = \{s'|s'\step{a}s\in S\}$.
We write $\step{}^*$ for the transitive and reflexive closure of $\step{}$ and
let $Pre^*(S) = \{s'|s'\step{}^*s\in S\}$.

We fix a finite set $\Var$ of \emph{variables} ranging over the integers 
and a finite set $\Const\subseteq\Z$ of constants.
Let $\Val$ denote the set of variable \emph{valuations} $\nu:\Var\to\Z$.
To simplify the notation, we will extend the domain of valuations 
to constants, where they behave as the identity, 
i.e., $\nu(c)=c$ for all $c\in \Z$.

\begin{definition}[Gap Constraints]
    A \emph{gap clause} over ($\Var,\Const$)
    is an inequation of the form
    \begin{equation}
      (x-y\ge k)
    \end{equation}
    where $x,y\in \Var\cup \Const$ and $k\in \Z$.
    A clause is called \emph{positive} if $k\in\N$.
    A (positive) gap \emph{constraint} is a finite conjunction of (positive) gap clauses.
    A \emph{gap formula} is an arbitrary boolean combination of gap clauses.

    A valuation $\nu:\Var\to\Z$ satisfies the clause $\Con{C}:(x-y)\ge k$
    (write $\nu\models \Con{C}$) if it respects the prescribed inequality. That is,
    \begin{equation}
        \nu\models (x-y)\ge k \iff \nu(x)-\nu(y)\ge k.
    \end{equation}
    We define the satisfiability of arbitrary gap formulae inductively in the usual 
    fashion and write
    $\Sat(\varphi) = \{\nu\in \Val\ |\ \nu\models\varphi\}$
    for the set of valuations that satisfy the formula $\varphi$. In particular, a valuation
    satisfies a gap constraint iff it satisfies all its clauses.
    A set $S\subseteq \Val$ of valuations is called \emph{gap definable} if there is a gap formula
    $\varphi$ with $S=\Sat(\varphi)$.
\end{definition}

We will consider processes whose states are described by valuations and whose
dynamics is described by stepwise changes in these variable valuations,
according to positive gap constraints.

Let $\Var'=\{x' \,|\, x\in \Var\}$ be the set of primed copies of the variables.
These new variables are used to express constraints on how values can change when moving from one
valuation to another: $x'$ is interpreted as the next value of variable $x$.
A \emph{transitional} gap clause (-constraint, -formula) is a gap clause
(-constraint, -formula) with variables in $\Var\cup \Var'$.
The combination $\nu\oplus\nu':\Var\cup \Var'\to\Z$
of two valuations $\nu,\nu':\Var\to \Z$ maps
variables $x\in\Var$ to $\nu\oplus\nu'(x)=\nu(x)$
and $x'\in\Var'$ to $\nu\oplus\nu'(x')=\nu'(x)$.

Transitional gap clauses can be used as conditions on how valuations may evolve in one step.
For instance, $\nu$ may change to $\nu'$ only if $\nu\oplus\nu'\models\varphi$ for some
gap clause $\varphi$.

\begin{definition}
    A \emph{Gap-Order Constraint System} (GCS) is given by 
    a finite set of \emph{positive} transitional gap constraints
    together with a labeling function.
    Formally, a GCS is a tuple $\G=(\Var,\Const,\Act,\Delta,\lambda)$ where $\Var,\Const,\Act$ are
    finite sets of variables, constants and action symbols, $\Delta$ is a finite set of
    \emph{positive} transitional gap constraints over $(\Var,\Const)$ and
    $\lambda:\Delta\to\Act$ is a labeling function. 
    Its operational semantics is given by an infinite LTS with states $\Val$ where
    \begin{equation}
        \nu\step{a}\nu' \iff \nu\oplus\nu'\models\Con{C}
    \end{equation}
    for some constraint $\Con{C}\in\Delta$ with $\lambda(\Con{C})=a$.
    For a set $M\subseteq \Val$ of valuations we write $\Pre{\Con{C}}{M}$ for the set
    $\{\nu\ |\ \exists \nu'\in M.\, \nu\oplus\nu'\models\Con{C}\}$
    of $\Con{C}$-predecessors.
\end{definition}

Observe that a positive gap constraint $(x-0 \ge 0)\ \land\ (0-x\ge 0)$ is satisfied only by valuations
assigning value $0$ to variable $x$.
Similarly, one can test if a valuation equates two variables.
Also, it is easy to simulate a finite control in a GCS using additional variables.\footnote{
    In fact, \cite{BP2012,Boz2012} consider an equivalent notion of GCS that
    explicitly includes a finite control.}
What makes this model computationally non-universal is the fact that we demand \emph{positive} constraints:
while one can easily demand an increase or decrease of variable $x$ by \emph{at least} some offset $k\in\N$,
one cannot demand a difference of \emph{at most} $k$ (nor exactly $k$).

\begin{example}
  \label{ex:countdown}
Consider the GCS with variables $\{x,y\}$ and single constant $\{0\}$ with two constraints
$\Delta=\{\Con{C}X,\Con{C}Y\}$ for which
$\lambda(\Con{C}X)=a$ and $\lambda(\Con{C}Y)=b$.
\begin{align}
    \Con{C}X = &((x-x'\ge 1) \land\ (y'-y\ge 0)\ \land\ (y-y'\ge 0)\ \land\ (x'-0\ge 0))\\
    \Con{C}Y = &((y-y'\ge 1)\ \land\ (x'-x\ge 0)\ \land\ (y'-0\ge 0)).
\end{align}
This implements a sort of lossy countdown where every step
strictly decreases the tuple $(y,x)$ lexicographically:
$\Con{C}X$ induces $a$-steps that decrease $x$ while preserving the value of $y$
and $\Con{C}Y$ induces $b$-steps that increase $x$ arbitrarily but have to decrease $y$ at the same time.
The last clauses in both constraints ensure that $x$ and $y$ never
change from a non-negative to a negative value.

\end{example}
In the sequel, we allow ourselves to abbreviate constraints for the sake of readability.
For instance, the constraint $\Con{C}X$ in the previous example could equivalently be written as
$(x>x'\ge 0)\ \land\ (y=y')$.

    \section{Branching-Time Logics for GCS}
We consider (sublogics of) the branching-time logic CTL over processes defined by gap-order
constraint systems, where atomic propositions are gap clauses.
The denotation of an atomic proposition $\mathcal{C} = (x-y\ge k)$ is $\DEN{\mathcal{C}} = Sat(\mathcal{C})$,
the set of valuations satisfying the constraint.
Well-formed CTL formulae are inductively defined by the following grammar, where $\Con{C}$ ranges
over the atomic propositions and $a \in \Act$ over the action symbols.
\begin{equation}
    \label{eq:grammar}
    \psi ::= \Con{C} \ \  |\ \  true\ \  |\ \  \lnot \psi\ \  |\ \  \psi \land \psi\ \  |\ \  \exnext{a}\psi
    \ \  |\ \  \OP{EF}\psi\ \  |\ \  \OP{EG}\psi\ \  |\ \  \OP{E}(\psi \OP{U} \psi)
\end{equation}
To define the semantics, we fix a GCS $\G$. 
Let ${\it Paths}^\omega(\nu_0)$ be the set of infinite derivations
\begin{equation}
    \pi = \nu_0\step{a_0}\nu_1\step{a_1}\nu_2\dots
\end{equation}
of $\G$ starting with valuation $\nu_0\in \Val$ and let $\pi(i)=\nu_i$ denote the $i$-th valuation $\nu_i$
on $\pi$. Similarly, we write ${\it Paths}^*(\nu_0)$ for the set of finite
derivations starting from $\nu_0$.
The denotation of formulae, with respect to the fixed GCS $\G$, is defined in the standard way.
\begin{align}
    \DEN{\Con{C}} &= Sat(\Con{C})\\
    \DEN{{\it true}} &= \Val\\
    \DEN{\lnot \psi} &= \Val\setminus \DEN{\psi}\\
    \DEN{\psi_1\land\psi_2} &= \DEN{\psi_1}\cap\DEN{\psi_2}\\
    \DEN{\exnext{a}\psi} &= \Pre{a}{\DEN{\psi}}\\
    \DEN{\OP{EF}\psi} &= \{\nu\ |\ \exists \pi \in Paths^*(\nu).\ \exists i\in\N.\ \pi(i)\in\DEN{\psi}\}\\
    \DEN{\OP{EG}\psi} &= \{\nu\ |\ \exists \pi \in Paths^\omega(\nu).\ \forall i\in\N.\ \pi(i)\in\DEN{\psi}\}\\
    \DEN{\OP{E}(\psi_1 \OP{U} \psi_2)} &= \{\nu\ |\ \exists \pi\in Paths^*(\nu).\ \exists i\in\N.
        \pi(i)\in\DEN{\psi_2} \land
        \forall j<i. \pi(j)\in\DEN{\psi_1}\}
\end{align} 
We use the usual syntactic abbreviations
${\it false} = \lnot {\it true}$,
$\psi_1\lor\psi_2 = \lnot(\lnot\psi_1\land\lnot\psi_2)$.

The sublogics $\EF$ and $\EG$ are defined by restricting the grammar \eqref{eq:grammar} defining
well-formed formulae: \EG\ disallows subformulae of the form $\OP{E}(\psi_1 \OP{U}\psi_2)$ 
and $\OP{EF}\psi$ and
in \EF, no subformulae of the form $\OP{E}(\psi_1 \OP{U} \psi_2)$ or $\OP{EG}\psi$ are allowed.
The \emph{Model Checking Problem} is the following decision problem.

\vspace{0.3cm}
\begin{tabular}{ll}
  {\sc Input:}  &A GCS $G=(\Var,Const,\Act,\Delta,\lambda)$, a valuation $\nu:\Var\to\Z$\\
              &and a formula $\psi$.\\
  {\sc Question:} &$\nu\models \psi$?
\end{tabular}
\vspace{0.3cm}

Cerans \cite{Cer1994} showed that general \CTL\ model checking is undecidable
for gap-order constraint systems.
This result holds even for restricted \CTL\ without \emph{next} operators $\exnext{a}$.
In the following section, we show a similar undecidability result for the fragment \EG.  
On the other hand, model checking GCS with the fragment \EF\ turns out to be decidable;
cf.\ Section~\ref{sec:EF}.

    \section{Undecidability of EG Model Checking}
    \label{sec:EG}
\begin{theorem}
\label{thm:eg}
    The model checking problem for \EG\ formulae over GCS is undecidable.
\end{theorem}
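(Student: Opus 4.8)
The plan is to reduce from the non-halting problem for deterministic two-counter Minsky machines, which is undecidable (it is the complement of the halting problem). Given a machine $M$ I would build a GCS $\G$, an initial valuation $\nu_0$, and an \EG\ formula $\psi$ such that $\nu_0\models\psi$ iff $M$ admits an infinite computation. A configuration is encoded by a valuation in the obvious way: a handful of variables hold the current control location and one variable per counter holds its value. The existential character of $\OP{EG}$ matches the existence of an infinite run, so the real work is to force any infinite object witnessing $\nu_0\models\psi$ to be a \emph{faithful} simulation of $M$. The obstacle is exactly the positivity restriction on gap constraints: a single step can demand $c'-c\ge 1$ (increment by \emph{at least} one) but cannot demand $c'-c\le 1$, as that would require a clause $c-c'\ge -1$ with a negative constant. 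Counter steps are therefore inherently \emph{sloppy}, and exactness has to be recovered by the logic.

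To recover exactness I would instrument each counter operation with an \emph{error gadget}. An increment of $c$ fires a transition enforcing $c'>c$ (i.e.\ $c'-c\ge 1$) while copying the old value of $c$ into an auxiliary variable $c_{\mathit{old}}$ via an exact equality constraint; decrements are symmetric. From the resulting state a dedicated transition labelled $e_c$ picks a witness value $z'$ strictly between the old and new value, using the two strict clauses $z'-c_{\mathit{old}}\ge 1$ and $c-z'\ge 1$; such an integer exists iff the operation changed $c$ by at least two, i.e.\ iff it over- or under-shot. The point is that the \emph{enabledness} of this gadget is the state property $\exnext{e_c}\mathit{true}$, so $\lnot\exnext{e_c}\mathit{true}$ certifies that the last operation on $c$ was exact --- using only the next modality and negation, and hence staying inside the \EG\ fragment (no $\OP{EF}$ or $\OP{E}(\cdot\,\OP{U}\,\cdot)$ is ever needed).

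Control flow and zero-tests need no such trick. A control location is a gap formula, and a zero-test is the exact constraint $c=0$ (expressible as noted after the definition of GCS), so the transitions can be guarded to make the choice of which instruction to apply deterministic and faithful; only the \emph{magnitude} of a change can cheat, and that is what $e_c$ detects. I would then take $\psi=\OP{EG}(\mathit{valid}\land\lnot\mathit{halt})$ with $\mathit{valid}=\bigwedge_c\lnot\exnext{e_c}\mathit{true}$ and $\mathit{halt}$ the gap formula asserting the halting location. If $M$ runs forever, its unique faithful run is an infinite path on which every step is exact and the halting location never occurs, so $\psi$ holds at $\nu_0$; conversely, any path witnessing $\OP{EG}$ keeps $\mathit{valid}$ true at every state, so all its steps are exact and all control choices are forced, whence (by determinism of $M$) it coincides with the computation of $M$ and is infinite and halt-free exactly when $M$ does not halt. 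Since the construction uses only strict and non-strict order comparisons, it already applies to IRA.

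The hard part will be the design and correctness of the exactness gadget. One must ensure that the witness transition is enabled in \emph{precisely} the over-/under-shoot case, that $c_{\mathit{old}}$ is refreshed and tested at exactly the right moment so that $\lnot\exnext{e_c}\mathit{true}$ is a reliable \emph{local} certificate of exactness at every state along the run (including intermediate gadget states and steps touching other counters), and that no sloppy computation can disguise itself as a faithful infinite one. Getting these timing and enabledness conditions right, within the positivity discipline and without leaving the \EG\ fragment, is where the argument really has to be careful.
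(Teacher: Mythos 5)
Your proposal is correct and follows essentially the same route as the paper: an over-approximating simulation of a deterministic 2CM in which each counter update remembers the old value, an error transition that is enabled precisely when a strictly intermediate witness value exists (i.e.\ when the update overshot), and the formula $\OP{EG}(\lnot halt \land \lnot\exnext{\cdot}\mathit{err})$ to assert the error move is never enabled along an infinite halt-free path. The paper realizes your ``exactness gadget'' with an intermediate control state per update and a unary action alphabet (testing $\exnext{a}{\it err}$ rather than a per-counter label), but these are cosmetic differences.
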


\begin{proof}
    By reduction from the halting problem of deterministic $2$-counter Minsky Machines (2CM).
    $2$-counter machines consist of a deterministic finite control, including a designated halting
    control-state ${\it halt}$ and two integer counters that can be incremented and decremented 
    by one and tested
    for zero. Checking if such a machine reaches the halting state 
    from an initial configuration with control-state ${\it init}$ and 
    counter values $x_1=x_2=0$ is undecidable \cite{Min1967}.
    Given a 2CM $M$, we will construct a GCS together with an initial valuation $\nu_0$ and a \EG\
    formula $\psi$ such that $\nu_0\models\psi$ iff $M$ does not halt.

    First of all, observe that we can simulate a finite control of $n$ states using one additional
    variable $state$ that will only ever be assigned values from $1$ to $n$.
    To do this, let $[p]\le n$ be the index of state $p$ in an arbitrary enumeration of the state
    set. Now, a transition $p\step{}q$ from state $p$ to $q$ introduces the constraint
    $(state=[p]\land state' = [q])$. We will abbreviate such constraints by $(p\step{}q)$ in the
    sequel and simply write $p$ to mean the clause $(state=[p])$.

    We use two variables $x_1,x_2$ to act as integer counters. Zero-tests can then directly be
    implemented as constraints $(x_1=0)$ or $(x_2=0)$.
    It remains to show how to simulate increments and decrements by exactly $1$.
    Our GCS will use two auxiliary variables $y,z$ and a new state ${\it err}$.
    We show how to implement increments by one; decrements can be done analogously.

    Consider the $x_1$-increment $p\step{x_1=x_1+1}q$ that takes the 2CM from state $p$ to $q$ and
    increments the counter $x_1$. The GCS will simulate this in two steps, as depicted in
    Figure~\ref{fig:eg-inc} below.
    \begin{figure}[ht]
           \begin{center}
            \begin{tikzpicture}
                \node(S) at (0,0) {$p$};
                \node(M) at (5,0) {$to_q$};
                \node(S') at (10,0) {$q$};
                \node(H) at (5,-1.5) {${\it err}$};
                \path[->] (S) edge node {$x_1'>x_1=y'$} (M);
                \path[->] (M) edge node {$y<z'<x_1$} (H);
                \path[->] (M) edge node {$x_1'=x_1$} (S');
           \end{tikzpicture}
           \end{center}
    \caption{Forcing faithful simulation of $x_1$-increment.
        All steps contain the additional constraint $x_2'=x_2$, which is not shown,
        to preserve the value of the other counter $x_2$.
    \label{fig:eg-inc}
    }
    \end{figure}
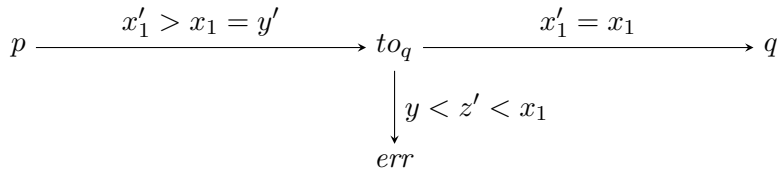
    The first step can arbitrarily increment $x_1$ and will remember (in variable $y$) the old value
    of $x_1$. The second step does not change any values and just moves to the new control-state.
    However, incrementing by more than one in the first step enables an extra move to the error state
    ${\it err}$ afterwards. This error-move is enabled if one can assign a value to variable $z$ that is
    strictly in between the old and new value of $x_1$, which is true iff the increment in step 1
    was not faithful.
    The incrementing transition of the 2CM is thus translated to the following three constraints.
    \begin{align}
        &(p\step{}to_q)\land (x_1'>x_1=y') \land (x_2'=x_2)\\
        &(to_q\step{}q)\land (x_1'=x_1)\land (x_2'=x_2)\\
        &(to_q\step{}{\it err})\land (y<z'<x_1).
    \end{align}

    If we translate all operations of the 2CM into the GCS formalism as indicated above,
    we end up with an over-approximation of the 2CM that allows runs that faithfully
    simulate runs in the 2CM but also runs which `cheat' and possibly
    increment or decrement by more than one and still don't go to state ${\it err}$ in the following step.

    We enforce a faithful simulation of the 2CM by using the formula that is to be checked, demanding
    that the error-detecting move is never enabled.
    The GCS will only use a unary alphabet $\Act=\{a\}$ to label constraints. In particular,
    observe that the formula $\exnext{a}err$ holds in every configuration which can move to state
    ${\it err}$ in one step. Now, the \EG\ formula 
    \begin{equation}
        \psi= \OP{EG}(\lnot halt\ \land\ \lnot\, \exnext{a}{\it err})
    \end{equation}
    asserts that there is an infinite path which never visits state ${\it halt}$ and
    along which no step to state ${\it err}$ is ever enabled.
    This means $\psi$ is satisfied by valuation $\nu_0=\{state=[init], x_1=x_2=y=z=0\}$ iff there is
    a faithful simulation of the 2CM from initial state ${\it init}$ with both counters set to $0$
    that never visits the halting state.
    Since the 2CM is deterministic, there is only one way
    to faithfully simulate it and hence $\nu_0\models\psi$ iff the 2CM does not halt.
    Notice that the constructed GCS is in fact an IRA \cite{Cer1994},
    since it only uses gap constraints of the form $x>y$ or $x=y$.
    \qed
\end{proof}

    \section{Decidability of EF Model Checking}
    \label{sec:EF}
Let us fix sets $\Var$ and $\Const$ of variables and constants, respectively.
We will use an alternative characterization of gap constraints
in terms of
\emph{monotonicity graphs}
\footnote{These were called \emph{Graphose Inequality Systems} in \cite{Cer1994}
    and \emph{gap-graphs} in \cite{Rev1993}.
},
which are finite graphs with nodes $\Var\cup \Const$.
Monotonicity graphs are used to represent sets of variable valuations.
We show that so represented sets
are effectively closed under all logical connectors allowed in \EF,
and one can thus evaluate a formula bottom up.

\begin{definition}[Monotonicity Graphs]
    \label{def:MGs}
    A \emph{monotonicity graph} (MG) over $(\Var, \Const)$ is a finite, directed graph $M=(V,E)$ with
    nodes $V=\Var\cup \Const$ and in which each edge in $E$ carries a
    \emph{weight} in $\Z\cup\{-\infty,\infty\}$.
    The \emph{degree} of $M$ is the
    largest $k\in\N$ such that there is an edge with weight $-k$ in $M$
    or $0$ if no edge has weight in $\Z\setminus\N$.
    The degree of a set $\{M_0,M_1,\dots,M_j\}$ of MG is defined as the
    maximal degree of any MG $M_k$ in the set.

    A valuation $\nu:\Var\to\Z$ satisfies $M$
    (write $\nu\models M$) if for every edge
    $(x\step{k}y)$ it holds that
    $\nu(x)-\nu(y)\ge k$.
    Let $\Sat(M)$ denote the set of
    valuations satisfying $M$.
    A set $S\subseteq \Val$ is \emph{MG-definable} if there is a finite set
    $\{M_0,M_1,\dots,M_j\}$ of MG such that
    \begin{equation}
      S=\bigcup_{0\le i\le j} Sat(M_i)
    \end{equation}
    and called $\text{MG}^n$-definable if there is such a set of MG with degree
    $\le n$. We write $\textit{MG}$ and $\textit{MG}^n$ for the classes of
    $\text{MG}$- and $\text{MG}^n$-definable sets respectively.
    
    For a monotonicity graph $M$,
    we write $M(x,y) \in\{-\infty,\infty\}\cup\Z$ for the least upper bound
    of the cumulative weight of all paths from node $x$ to node $y$. 
    Note that this is $-\infty$ if there is no such path.
    The \emph{closure} $|M|$ is the unique complete MG with edges $x\step{M(x,y)}y$ for all $x,y \in \Var\cup \Const$.
\end{definition}
The following lemma and definition state some basic properties of monotonicity
graphs that can easily be verified; see \cite{Cer1994}.

\begin{lemma}
\label{lem:mg-basics}
  \
  \begin{enumerate}
      \item $\Sat(M)=\emptyset$ holds for any monotonicity graph $M$ that contains an edge
          with weight $\infty$
          or a cycle with positive weight sum.
      \item $|M|$ is polynomial-time computable from $M$ and $\Sat(M) = \Sat(|M|)$.
      \item If we fix sets $\Var,\Const$ of variables and constants then
          for any gap constraint $\Con{C}$ there is a unique monotonicity graph
          $M_{\Con{C}}$,
          containing an edge $x\step{k}y$ iff there is a clause $x-y\ge k$ in $\Con{C}$.
          Moreover, $\Sat(M_{\Con{C}})=\Sat(\Con{C})$.
  \end{enumerate}
\end{lemma}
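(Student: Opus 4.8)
The three parts are largely independent, and the plan is to establish them in order, each by elementary means. For part~(1), I would argue by summation around the offending structure. An edge $x\step{\infty}y$ would require $\nu(x)-\nu(y)\ge\infty$ of any satisfying $\nu$, which no integer-valued valuation can meet, so $\Sat(M)=\emptyset$. For a cycle $x_0\step{k_0}x_1\step{k_1}\cdots\step{k_{m-1}}x_0$ with $\sum_i k_i>0$, any $\nu\models M$ must satisfy $\nu(x_i)-\nu(x_{i+1})\ge k_i$ for all $i$ (indices modulo $m$); adding these inequalities makes the left-hand side telescope to $0$ while the right-hand side equals $\sum_i k_i>0$, a contradiction, so again $\Sat(M)=\emptyset$. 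Note this argument is oblivious to whether the nodes are variables or constants, since it only sums differences.

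For part~(2), I would prove the equality $\Sat(M)=\Sat(|M|)$ by two inclusions and then address the complexity of computing $|M|$. Since a single edge $x\step{k}y$ of $M$ is in particular a path from $x$ to $y$, we have $M(x,y)\ge k$, so every edge of $|M|$ is at least as strong as the corresponding edge of $M$; thus $\nu\models|M|$ implies $\nu\models M$. Conversely, if $\nu\models M$ then along any path $x=x_0\step{}\cdots\step{}x_m=y$ the edge inequalities sum to $\nu(x)-\nu(y)\ge\sum_i k_i$, and taking the supremum over all such paths yields $\nu(x)-\nu(y)\ge M(x,y)$, i.e.\ $\nu\models|M|$; the case $M(x,y)=-\infty$ is vacuously fine, and the case $M(x,y)=\infty$ forces a positive cycle, so $\Sat(M)=\emptyset$ by part~(1) and the inclusion holds trivially. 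To compute $|M|$ in polynomial time I would run a Floyd--Warshall-style fixpoint over the $(\max,+)$ semiring on the finite node set $\Var\cup\Const$: initialise $M(x,y)$ to the largest weight of a direct $x$-to-$y$ edge (and $-\infty$ if none), then repeatedly relax $M(x,y)\mapsto\max\{M(x,y),\,M(x,z)+M(z,y)\}$ over all intermediate $z$. On the finite node set this terminates in cubic time.

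The one delicate point, and the step I expect to be the main obstacle, is the treatment of positive cycles in part~(2): a naive ``longest path'' computation is intractable in general, and what rescues us is precisely that $M(x,y)$ is defined as a supremum over \emph{all} paths, with repetition allowed. Hence a positive-weight cycle lying on a path from $x$ to $y$ makes $M(x,y)=\infty$, while in the absence of positive cycles every optimum is attained by a simple path and the relaxation above converges. Concretely, after the relaxation passes I would inspect the diagonal: any node $w$ with $M(w,w)>0$ sits on a positive cycle, and I would then set $M(x,y)=\infty$ for every pair joined by a path through such a $w$. Finally, part~(3) is immediate from the definitions: once $\Var$ and $\Const$ are fixed, the prescribed edge set is in one-to-one correspondence with the clause set of $\Con{C}$, which fixes $M_{\Con{C}}$ uniquely, and since $\nu\models M_{\Con{C}}$ iff $\nu$ respects every edge iff $\nu$ satisfies every clause iff $\nu\models\Con{C}$, we obtain $\Sat(M_{\Con{C}})=\Sat(\Con{C})$.
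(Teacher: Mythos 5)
Your proof is correct, and it takes the expected route: the paper itself gives no argument for this lemma (it merely remarks that the properties ``can easily be verified'' and cites Cerans), so your telescoping argument for positive cycles, the two-inclusion argument for $\Sat(M)=\Sat(|M|)$ with the $(\max,+)$ Floyd--Warshall closure and positive-cycle detection on the diagonal, and the definitional unwinding for part~(3) supply exactly the standard details the paper omits. You even correctly isolate the only delicate point, namely that the supremum ranges over walks rather than simple paths, which is what makes the closure computable in polynomial time despite longest simple path being hard.
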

The last point of this lemma states that monotonicity graphs and gap constraints are equivalent
formalisms. We call a MG \emph{positive} if it has degree $0$. Positive MG are
equivalent to positive gap constraints.
We thus talk about \emph{transitional} monotonicity graphs over $(\Var,\Const)$ as those
with nodes $\Var\cup \Var'\cup \Const$.
We further define the following operations on MG.
\begin{definition}
\label{def:mg-ops}
    Let $M,N$ be monotonicity graphs over $\Var, \Const$ and $V\subseteq \Var$.
    \begin{itemize}
        \item The \emph{restriction} $M|_V$ of $M$ to $V$ is the maximal subgraph of $M$ with nodes
              $V\cup \Const$.
        \item The \emph{projection} $\Proj(M,V) = |M|_V$ is the restriction of $M$'s closure to
              $V$.
        \item The \emph{intersection} $M\otimes N$ is the MG that contains an edge $x\step{k}y$ if
              $k$ is the maximal weight of any edge from $x$ to $y$ in $M$ or $N$.
        \item The \emph{composition} $G\circ M$ of a \emph{transitional} MG $G$ and $M$ is obtained
              by consistently renaming variables in $M$ to their primed copies, intersecting the
              result with $G$ and projecting to $\Var\cup \Const$.
              $G\circ M:= \Proj(M_{[\Var\mapsto \Var']}\otimes G, \Var)$.
    \end{itemize}
\end{definition}
These operations are surely computable in polynomial time.
The next lemma states important properties of these operations; see also \cite{Cer1994,BP2012}.
\begin{lemma}
\label{lem:mg-ops}
    \ 
    \begin{enumerate}
        \item $\Sat(\Proj(M,V)) = \{\nu|_V : \nu\in \Sat(M)\}$.
        \item $\Sat(M\otimes N) = \Sat(M)\cap \Sat(N)$
        \item $\Sat(G\circ M) = \{\nu\ |\ \exists \nu'\in \Sat(M).\ \nu\oplus\nu'\in \Sat(G)\} = \Pre{G}{M}$.
        \item If $M$ has degree $n$ and $G$ is a transitional MG of degree $0$,
            then $G\circ M$ has degree $\le n$.
    \end{enumerate}
\end{lemma}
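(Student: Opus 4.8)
The plan is to prove the four parts in the order (2), (1), (3), (4), since (3) simply combines (1) and (2) and (4) traces degrees through that same construction. Part (2) is immediate from the definition: an edge $x\step{k}y$ of $M\otimes N$ carries the weight $k=\max$ of the corresponding weights in $M$ and $N$, and a valuation meets the bound $\nu(x)-\nu(y)\ge k$ for this maximum iff it meets both the bound inherited from $M$ and the one inherited from $N$. Conjoining over all edges of $M\otimes N$ yields $\Sat(M\otimes N)=\Sat(M)\cap\Sat(N)$ directly.

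For part (1) I would prove the two inclusions separately. The inclusion $\supseteq$ is easy: if $\nu\in\Sat(M)$ then $\nu\in\Sat(|M|)$ by Lemma~\ref{lem:mg-basics}(2), and since every edge of $\Proj(M,V)=|M|_V$ is an edge of $|M|$ between nodes of $V\cup\Const$, the restriction $\nu|_V$ satisfies all of them, so $\nu|_V\in\Sat(\Proj(M,V))$. The hard part, and the main obstacle of the whole lemma, is $\subseteq$: given $\mu:V\to\Z$ with $\mu\models|M|_V$, I must realize it as a full valuation, i.e.\ build $\nu\in\Sat(M)$ with $\nu|_V=\mu$. I would insert the variables of $\Var\setminus V$ one at a time, keeping the invariant that the partial valuation satisfies every edge of $|M|$ among the nodes placed so far (constants count as pre-placed with their fixed values). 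When a new variable $w$ is added, the admissible values form the interval $[\,\max_u(\nu(u)+|M|(w,u)),\ \min_u(\nu(u)-|M|(u,w))\,]$ over already-placed nodes $u$, and I would show it is non-empty using the triangle inequality $|M|(u_2,w)+|M|(w,u_1)\le|M|(u_2,u_1)$ satisfied by the closure weights, together with $\nu(u_2)-\nu(u_1)\ge|M|(u_2,u_1)$. The degenerate case where $M$ is unsatisfiable (both sides empty) I would dispatch separately.

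Part (3) then follows by unfolding $G\circ M=\Proj(M_{[\Var\mapsto\Var']}\otimes G,\Var)$. Writing a valuation on $\Var\cup\Var'$ as $\nu\oplus\nu'$, part (2) gives $\Sat(M_{[\Var\mapsto\Var']}\otimes G)=\{\nu\oplus\nu' : \nu'\in\Sat(M)\text{ and }\nu\oplus\nu'\models G\}$, because the renamed graph constrains only the primed coordinates and so asserts exactly $\nu'\in\Sat(M)$. Applying part (1) with $V=\Var$ projects the primed coordinates away, i.e.\ existentially quantifies $\nu'$, yielding $\{\nu : \exists\nu'\in\Sat(M).\ \nu\oplus\nu'\models G\}=\Pre{G}{M}$. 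The only care needed here is the primed/unprimed bookkeeping and the observation that projecting to $\Var$ (rather than $\Var'$) is what retains the current valuation while eliminating the successor.

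For part (4) I would track weights through the same construction. Renaming preserves weights, so $M_{[\Var\mapsto\Var']}$ has degree $n$; since $G$ has degree $0$ all its edges are non-negative and intersection takes maxima, so every negative edge of $H:=M_{[\Var\mapsto\Var']}\otimes G$ is inherited from the renamed $M$, has weight $\ge -n$, and lies between primed nodes in $\Var'\cup\Const$. The remaining difficulty, which is the crux of this part, is the projection $|H|_\Var$, where forming the closure could in principle accumulate negative weight along a long path. Here I would analyse a max-weight realizing path and argue that each maximal excursion it makes into the primed region contributes weight $\ge -n$, since such an excursion corresponds to an implied gap of $M$, bounded below by $-n$ once $M$ is read through its closure; consecutive excursions are joined through retained nodes of $\Var$ by non-negative $G$-edges, so no constraint below $-n$ is essential and the degree of $G\circ M$ stays $\le n$. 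Establishing this bound on the negative weight accumulated by projection is exactly the step I expect to require the most care.
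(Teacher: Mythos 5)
The paper itself offers no proof of this lemma---it only points to \cite{Cer1994,BP2012}---so your attempt has to be judged on its own merits. Parts (1)--(3) are handled correctly: the one-variable-at-a-time extension argument for the hard inclusion of (1), with non-emptiness of the admissible interval coming from the triangle inequality of the closure weights, is exactly the right (and the standard) argument, and (2) and (3) follow as you describe. One caveat on your treatment of the degenerate case in (1): if $M$ is unsatisfiable because of a positive-weight cycle confined to nodes of $\Var\setminus V$ that lie on no path between two nodes of $V\cup\Const$, then the right-hand side is empty while $\Sat(|M|_V)$ need not be, so ``both sides empty'' is not automatic; you must detect inconsistency of $M$ globally (Lemma~\ref{lem:mg-basics}(1)) rather than read it off $|M|_V$.

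The genuine gap is in part (4), and it is not one you can close, because the statement is false as written. Take $\Var=\{x,y,u,v,p,q\}$, $\Const=\emptyset$, let $M$ consist of the two edges $u\step{-1}v$ and $p\step{-1}q$ (degree $1$, and already closed), and let $G$ be the positive transitional MG with edges $x\step{0}u'$, $v'\step{0}p'$, $q'\step{0}y$ (degree $0$). The only path from $x$ to $y$ in $M_{[\Var\mapsto \Var']}\otimes G$ is $x\to u'\to v'\to p'\to q'\to y$, of weight $-2$, so $G\circ M$ contains the edge $x\step{-2}y$ and has degree $2$. Semantically, $\Pre{G}{M}=\{\nu \mid \nu(x)-\nu(y)\ge -2\}$, and a pigeonhole argument on valuations with $u,v,p,q$ all set very large shows that this set is not $\text{MG}^1$-definable even by a finite union of graphs, so no reinterpretation of ``degree'' (e.g., via closures) rescues the bound $\le n$. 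The failure is precisely at the step you flagged: a weight-maximal path can make several disjoint excursions into the primed region, these are glued together by weight-$0$ edges of $G$, and their negative contributions add up; you cannot short-circuit two excursions by a single $M$-edge because the closure entry of $M$ between the endpoints of distinct excursions may be $-\infty$.

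What is actually true for a single step is $\deg(G\circ M)\le(|\Var|+|\Const|-1)\cdot n$, since a simple witnessing path uses at most $|\Var|+|\Const|-1$ negative edges, each of weight $\ge -n$ (after replacing $M$ by its closure). Note, however, that iterating this one-step bound blows up exponentially, so it does not by itself repair Lemma~\ref{lem:ef}: the uniform degree bound needed there has to be established globally, by bounding simple paths in the $k$-step unfolding of the system (where all negative edges still come from the single copy of $M$ at the last level), rather than by composing a per-step lemma.
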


\begin{example}
  \label{ex:mg}
The monotonicity graph on the left below corresponds to the contraint $\Con{C}X$ in
Example~\ref{ex:countdown}. On the right we see its closure (where edges with
weight $-\infty$
are omitted). Both have degree $0$.
\begin{center}
  \begin{minipage}[c]{.4\textwidth}
    \centering
    \begin{tikzpicture}[scale=1.0,node distance=2cm]  
        \node(x) [mgnode] {$x$};
        \node(x')[mgnode,right of=x] {$x'$};
        \node(y) [mgnode,below of=x] {$y$};
        \node(y')[mgnode,right of=y] {$y'$};
        \node (0)[mgnode] at ($(x)!0.5!(y')$) {$0$};

        \path[->] (x) edge node[label] {$1$} (x');
        \path[->] (y') edge[bend left=10] node[auto,label,pos=0.45] {$0$} (y);
        \path[->] (y) edge[bend left=10] node[auto,label,pos=0.45] {$0$} (y');
        \path[->] (x') edge node[auto,label] {$0$} (0);
    \end{tikzpicture}
  \end{minipage}
  \begin{minipage}[c]{.4\textwidth}
    \centering
    \begin{tikzpicture}[scale=1.0,node distance=2cm]  
        \node(x) [mgnode] {$x$};
        \node(x')[mgnode,right of=x] {$x'$};
        \node(y) [mgnode,below of=x] {$y$};
        \node(y')[mgnode,right of=y] {$y'$};
        \node (0)[mgnode] at ($(x)!0.5!(y')$) {$0$};

        \path[->] (x) edge node[label] {$1$} (x');
        \path[->] (x) edge node[label,swap] {$1$} (0);
        \path[->] (y') edge[bend left=10] node[auto,label,pos=0.45] {$0$} (y);
        \path[->] (y) edge[bend left=10] node[auto,label,pos=0.45] {$0$} (y');
        \path[->] (x') edge node[auto,label] {$0$} (0);
    \end{tikzpicture}
  \end{minipage}
\end{center}
Let us compute the $\Con{C}X$-predecessors of the set $S= \{\nu\ |\ \nu(x)>\nu(y)=0\}$ which is
characterized by the single MG on the right below.
\begin{center}
  \begin{minipage}[c]{.25\textwidth}
    \centering
    \begin{tikzpicture}[scale=1.0,node distance=2cm] 
        \node (0)[mgnode] at (1,0) {$0$};
        \node(x) at (0,1) [mgnode] {$x$};
        \node(y) [mgnode] at (0,-1){$y$};
        \path[->] (0) edge[bend left=10] node[auto,label,pos=0.45] {$0$} (y);
        \path[->] (y) edge[bend left=10] node[auto,label,pos=0.45] {$0$} (0);
        \path[->] (x) edge node[auto,label] {$2$} (0);
        \path[->] (x) edge node[swap,label] {$2$} (y);
    \end{tikzpicture}
  \end{minipage}
  \begin{minipage}[c]{.25\textwidth}
    \centering
    \begin{tikzpicture}[scale=1.0,node distance=2cm]  
        \node (0)[mgnode] at (0,0) {$0$};
        \node(x) [mgnode] at (-1,1) {$x$};
        \node(x')[mgnode] at (1,1) {$x'$};
        \node(y) [mgnode] at (-1,-1) {$y$};
        \node(y')[mgnode] at (1,-1) {$y'$};
  
        \path[->] (x) edge node[label] {$1$} (x');
        \path[->] (y') edge[bend left=10] node[below,label,pos=0.45] {$0$} (y);
        \path[->] (y) edge[bend left=10] node[above,label,pos=0.45] {$0$} (y');
        \path[->] (x') edge node[swap,label] {$1$} (0);
  
        \path[->] (0) edge[bend left=10] node[right,label,pos=0.45] {$0$} (y');
        \path[->] (y') edge[bend left=10] node[left,label,pos=0.45] {$0$} (0);
    \end{tikzpicture}
  \end{minipage}
  \begin{minipage}[c]{.25\textwidth}
    \centering
    \begin{tikzpicture}[scale=1.0,node distance=2cm]  
        \node (0)[mgnode] at (0,0) {$0$};
        \node(x) at (1,1) [mgnode] {$x$};
        \node(y) [mgnode] at (1,-1){$y$};
        \path[->] (0) edge[bend left=10] node[auto,label,pos=0.45] {$0$} (y);
        \path[->] (y) edge[bend left=10] node[auto,label,pos=0.45] {$0$} (0);
        \path[->] (x) edge node[swap,label] {$1$} (0);
    \end{tikzpicture}
  \end{minipage}
\end{center}
If we rename variables $x$ and $y$ to $x'$ and $y'$ and intersect the result with $M_{\Con{C}X}$ we
get the MG in the middle. We project into $Var\cup Const$ by computing the closure and restricting
the result accordingly. This leaves us with the MG on the left, which characterizes the set
$\Pre{\Con{C}X}{S}=\{\nu\ |\ \nu(x)\ge 2\ \land\ \nu(y)=0\}$ as expected.

\end{example}

We have seen how to construct a representation of the $\Con{C}$-predecessors $\Pre{\Con{C}}{S}$ and
thus $\Pre{a}{S}$ for a MG-definable set $S$, gap constraint $\Con{C}$ and action $a\in\Act$.
The next lemma is a consequence of Lemma~\ref{lem:mg-basics}, point 3 and asserts that we can do the
same for complements.
\begin{lemma}
    \label{lem:neg}
    The class of MG-definable sets is effectively closed under complementation.
\end{lemma}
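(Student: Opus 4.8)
The plan is to reduce complementation to De~Morgan's laws, exploiting the fact that a single gap clause is closed under negation. Write the given MG-definable set as $S = \bigcup_{i=0}^{j} Sat(M_i)$. Using Lemma~\ref{lem:mg-basics}(1) I would first discard any $M_i$ that contains an $\infty$-edge or a positive-weight cycle, since these satisfy $Sat(M_i)=\emptyset$ and contribute nothing to the union; likewise any edge of weight $-\infty$ imposes no constraint and can be deleted. After this harmless preprocessing each surviving $M_i$ is a conjunction of finitely many clauses $x-y\ge k$ with $k\in\Z$, i.e.\ the genuinely finite-weight edges of $M_i$.

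The first real step is to complement a single MG. Since $Sat(M_i)=\{\nu \mid \nu\models c \text{ for every clause } c \text{ of } M_i\}$, its complement is exactly the set of $\nu$ that violate \emph{some} clause, that is $\Val\setminus Sat(M_i)=\bigcup_{e} Sat(\lnot c_e)$, where $e$ ranges over the edges/clauses $c_e=(x-y\ge k)$ of $M_i$. The key observation is that $\lnot(x-y\ge k)$ is equivalent to $x-y\le k-1$, i.e.\ to the single gap clause $y-x\ge 1-k$. Hence each $Sat(\lnot c_e)$ is the satisfaction set of a one-edge MG, and by Lemma~\ref{lem:mg-basics}(3) the complement of a single $Sat(M_i)$ is a finite union of MGs, hence MG-definable.

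The second step combines these across $i$ by the dual De~Morgan identity $\Val\setminus S=\bigcap_i\bigl(\Val\setminus Sat(M_i)\bigr)$. Each factor is a finite union of one-edge MGs by the previous paragraph, so to recover a union of conjunctions I would distribute the intersection over the unions: choosing one disjunct $m_{i}$ from each factor and realizing the resulting finite conjunction $\bigcap_i Sat(m_i)$ as a single MG via the intersection operation, using $Sat(N\otimes L)=Sat(N)\cap Sat(L)$ from Lemma~\ref{lem:mg-ops}(2). Ranging over all such choice functions yields $\Val\setminus S$ as a finite union of MGs, i.e.\ as an MG-definable set. Every operation involved—negating a clause, forming $\otimes$, and enumerating the distribution—is finite and computable, so the construction is effective; the boundary cases $S=\Val$ and $S=\emptyset$ are covered because both $\Val$ (the edgeless MG) and $\emptyset$ (an MG with a positive cycle) are themselves MG-definable.

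I expect no genuine obstacle here: the mathematical heart is simply that the atomic propositions are individually closed under negation, so complementing a conjunction produces only gap clauses of the same kind and no new type of constraint arises. The only points requiring care, rather than difficulty, are the bookkeeping for the $\pm\infty$ weights (handled by the preprocessing) and the combinatorial blow-up from distributing intersection over union, which is exponential in $j$ but irrelevant for the effectiveness asserted by the lemma.
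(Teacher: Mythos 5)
Your proposal is correct and follows essentially the same route as the paper: both view the union of MGs as a gap formula in DNF, push the negation to the atoms via De~Morgan, use that $\lnot(x-y\ge k)$ is itself the gap clause $y-x\ge -k+1$, and redistribute back into DNF to obtain a finite union of MGs. You merely spell out the final DNF conversion explicitly via the $\otimes$ operation and add the (harmless and correct) preprocessing of $\pm\infty$ edges, which the paper leaves implicit.
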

\begin{proof}
    By Lemma~\ref{lem:mg-basics} we can interpret a finite set of MG $\mathcal{M} = \{M_0,M_1,\dots,M_k\}$ as
    a gap formula in DNF. One can then use De Morgan's laws to propagate negations to atomic
    propositions, which are gap clauses of the form $x-y\ge k$. The negation is
    then expressible as $x-y < k$, which is equivalent to $y-x > -k$ and thus
    to the gap clause $y-x\ge -k+1$. It remains to bring the formula into DNF again, 
    which can then be described by finitely many MGs.
    \qed
\end{proof}
Observe that complementation potentially constructs MG with increased degree.
This next degree is bounded by the largest finite weight in the current graph minus one,
but nevertheless, an increase of degree cannot be avoided.
Therefore, classes of $\text{MG}^n$-definable sets are \emph{not} closed
under complementation.
\begin{example}
    \label{ex:3}

The set $S= \{\nu\ |\ \nu(x) - \nu(y) \ge 5\}$ corresponds to the gap-formula
$\varphi = (x-y\ge 5)$. Its MG $\{(x\step{5}y)\}$ is of degree $0$.
However, its complement is characterized by the MG $\{(y\step{-4}x)\}$,
which has degree $4$.

\end{example}

\newcommand{\mgle}{\sqsubseteq}
It remains to show that we can compute ${\it Pre}^*(S)$ for MG-definable sets $S$.
We recall the following partial ordering on monotonicity 
graphs and its properties \cite{Cer1994}.
\begin{definition}
    Let $M,N$ be MG over $(\Var, \Const)$. We say that 
    $M$ \emph{covers} $N$ (write $N\mgle M$) if for all
    $x,y\in \Var\cup \Const$ it holds that $N(x,y)\le M(x,y)$.
\end{definition}
\begin{lemma}\
\label{lem:mg-order}
    \begin{enumerate}
        \item If $N\mgle M$ then $Sat(N)\supseteq Sat(M)$.
        \item
            For every $n\in\N$,
            $\sqsubseteq$ is a well-order on the set 
            of MG over $(\Var,\Const)$ with degree $\le n$.
    \end{enumerate}
\end{lemma}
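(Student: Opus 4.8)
The plan is to treat the two parts separately and to reduce everything to the closure $|M|$, i.e.\ to the numeric entries $M(x,y)$. For part~1 I would first record that satisfaction is governed entirely by the closure: a valuation satisfies $M$ iff $\nu(x)-\nu(y)\ge M(x,y)$ for all $x,y\in\Var\cup\Const$. The direction ``$\Leftarrow$'' is trivial since every edge is a path of length one; for ``$\Rightarrow$'' one sums the edge inequalities $\nu(u)-\nu(v)\ge k$ along any path witnessing the value $M(x,y)$, and Lemma~\ref{lem:mg-basics}(2) guarantees $\Sat(M)=\Sat(|M|)$ so that passing to the closure loses nothing. Granting this, part~1 is immediate: if $N\sqsubseteq M$ and $\nu\models M$, then $\nu(x)-\nu(y)\ge M(x,y)\ge N(x,y)$ for all $x,y$, hence $\nu\models N$, and therefore $\Sat(M)\subseteq\Sat(N)$.

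For part~2 I would view a closed MG of degree $\le n$ as the tuple $(M(x,y))_{x,y}$ indexed by the finite set $(\Var\cup\Const)^2$, with each entry in $\{-\infty\}\cup\Z\cup\{\infty\}$, so that $\sqsubseteq$ is exactly the coordinatewise $\le$ order. The strategy is to show that each coordinate ranges over a well-ordered set and then to conclude via the fact that the coordinatewise order on a finite product of well-orders is a well-quasi-order (Dickson's lemma). The crucial step, and the only place the degree bound is used, is a uniform lower bound on the finite entries. Here I would argue that a finite value $M(x,y)$ is always attained by a simple path: if no path exists the entry is $-\infty$, and if the supremum is not attained by a simple path then some walk can be inflated through an $\infty$-edge or a positive-weight cycle, forcing the value $+\infty$. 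A simple path uses at most $|\Var\cup\Const|-1$ edges, each of finite weight $\ge -n$ by the degree constraint, whence $M(x,y)\ge -n(|\Var\cup\Const|-1)=:-B$.

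Consequently every coordinate takes values in $\{-\infty\}\cup\{m\in\Z:m\ge -B\}\cup\{\infty\}$, which is totally ordered and well-founded under $\le$, hence a well-order; a finite product of such orders is a well-quasi-order, so any infinite sequence of degree-$\le n$ MGs contains an increasing pair $M_i\sqsubseteq M_j$ with $i<j$, which is the claimed property. I expect the main obstacle to be exactly this lower-bound argument: one must cleanly separate the three regimes ($-\infty$, finite-and-attained-by-a-simple-path, and $+\infty$) and observe that only lower-boundedness is needed, since arbitrarily large positive weights together with $\infty$ already form an ascending well-order and pose no threat to well-foundedness.
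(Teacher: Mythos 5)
Your proof is correct and follows essentially the same route as the paper: part~1 via the closure identity $\Sat(M)=\Sat(|M|)$ and the pointwise inequality $\nu(x)-\nu(y)\ge M(x,y)\ge N(x,y)$, and part~2 by encoding each MG as a $|\Var\cup\Const|^2$-dimensional vector of closure entries and invoking Dickson's Lemma. Your simple-path argument yielding the uniform lower bound $-n(|\Var\cup\Const|-1)$ on the finite entries is in fact slightly more careful than the paper's one-line version, which merely shifts each component by $n$ and leaves the boundedness of $M(x,y)$ and the treatment of $\pm\infty$ implicit.
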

\begin{proof}
    For the first claim, assume $\nu\in Sat(M)=Sat(|M|)$.
    Then, for every $x,y\in\Var\cup\Const$, we have
    $\nu(x)-\nu(y)\ge M(x,y)\ge N(x,y)$. So
    $\nu\in Sat(|N|)=Sat(N)$.

    The second claim follows from Dickson's Lemma if we interpret each
    MG $M$ with degree $n$ as $|\Var\cup\Const|^2$-dimensional vector where
    the component for the pair $(x,y)$ has value $n+M(x,y)$.
\end{proof}
Note that point 1 states that a $\mgle$-bigger MG is more restrictive and hence has a smaller
denotation. Also notice that $\mgle$ is \emph{not} a well order on the set of all MG
due the lack of a bound on finite, negative weights:
for instance, the sequence $(M_n)_{n\in\N}$ of MG, where
for every $n$, the graph $M_n$
has edges $x\step{n}y\step{-n}x$, is an infinite antichain.

\begin{lemma}
\label{lem:ef}
Let $S$ be a $\text{MG}^n$-definable set of valuations.
Then ${\it Pre}^*(S)$ is $\text{MG}^n$-definable and a representation of ${\it Pre}^*(S)$
  can be computed from a representation of $S$.
\end{lemma}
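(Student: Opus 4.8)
The plan is to obtain ${\it Pre}^*(S)$ as the limit of the standard backward-reachability iteration. Since the transition relation $\step{}$ is the union, over all constraints $\Con{C}\in\Delta$, of the relations they induce, ${\it Pre}^*(S)$ is the least fixpoint of the monotone operator
\begin{equation}
    F(X) = X \cup \bigcup_{\Con{C}\in\Delta}\Pre{\Con{C}}{X},
\end{equation}
that is, ${\it Pre}^*(S)=\bigcup_{i\in\N}F^i(S)$. I would therefore start from a representation of $S$ as a finite set $\mathcal{M}_0$ of MG of degree $\le n$ and repeatedly apply $F$, each time enlarging the current finite set of MG.

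A single application of $F$ is effective and stays within $\text{MG}^n$. Given a current set $\mathcal{M}$, for every $M\in\mathcal{M}$ and $\Con{C}\in\Delta$ I compute the MG $M_{\Con{C}}\circ M$; by Lemma~\ref{lem:mg-ops}(3) it satisfies $\Sat(M_{\Con{C}}\circ M)=\Pre{\Con{C}}{\Sat(M)}$, and since each constraint graph $M_{\Con{C}}$ is positive (degree $0$), Lemma~\ref{lem:mg-ops}(4) ensures $M_{\Con{C}}\circ M$ again has degree $\le n$. Collecting all these graphs together with $\mathcal{M}$ yields a finite set of MG of degree $\le n$ whose union of denotations is exactly $F$ applied to the previous denotation; hence every $F^i(S)$ is $\text{MG}^n$-definable with a computable representation.

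The crux is termination, and this is where the well-order of Lemma~\ref{lem:mg-order} enters. By part~1, $N\mgle M$ implies $\Sat(N)\supseteq\Sat(M)$, so a newly produced graph $M$ contributes no new valuations whenever the current set already contains some $N$ with $N\mgle M$; I would simply refuse to add such covered graphs. Now consider the sequence $M_1,M_2,\dots$ of graphs actually added over the whole run: by construction, when $M_j$ is added no earlier $M_i$ (with $i<j$) satisfies $M_i\mgle M_j$. All these graphs have degree $\le n$, so by part~2 of Lemma~\ref{lem:mg-order} the relation $\mgle$ is a well-order --- in particular a well-quasi-order --- on this class, which forbids such an infinite ``bad'' sequence. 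Hence only finitely many graphs are ever added, the iteration stabilises after finitely many steps, and the accumulated union then equals ${\it Pre}^*(S)$. The main obstacle is precisely keeping all degrees bounded by $n$ so that the well-order applies; this is exactly what the positivity of the constraints buys us via Lemma~\ref{lem:mg-ops}(4). Without such a bound $\mgle$ need not be a well-order (cf.\ the infinite antichain noted after Lemma~\ref{lem:mg-order}), and termination would break down.
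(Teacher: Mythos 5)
Your proof is correct and follows essentially the same route as the paper: iterate predecessor computation via composition with the (positive, degree-$0$) constraint graphs, invoke Lemma~\ref{lem:mg-ops}(3)--(4) to keep every intermediate graph within degree $n$, and terminate via the well-quasi-order of Lemma~\ref{lem:mg-order} combined with the subsumption of covered graphs. The only cosmetic difference is that the paper organises the exploration as a tree and prunes a branch when a node covers an ancestor, whereas you run a global saturation discarding any graph covered by one already present; both rest on exactly the same lemmas.
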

\begin{proof}
    It suffices to show the claim for a set $S$
    characterized by a single monotonicity graph $M_S$, because
    ${\it Pre}^*(S\cup S') = {\it Pre}^*(S)\cup {\it Pre}^*(S')$. 
    Assume that $M_S$ has degree $n$.

    We proceed by exhaustively building a finite tree of MG, starting in $M_S$. For every node $N$ we
    compute children $G\circ N$ for all of the finitely many transitional MG $G$ in the system.
    Point 4) of Lemma~\ref{lem:mg-ops} guarantees that all intermediate representations have
    degree $\le n$.  By Lemma~\ref{lem:mg-order}, point 2, any branch eventually ends in a node that
    covers a previous one and Lemma~\ref{lem:mg-order}, point 1 allows us to stop exploring such a
    branch.
    We conclude that ${\it Pre}^*(M)$ can be characterized by the finite union of all intermediate MG.
    \qed
\end{proof}
Finally, we are ready to prove our main result.
\begin{theorem}
\label{thm:EF-dec}
    \EF\ model checking is decidable for Gap-order constraint systems.
    Moreover, the set $\DEN{\psi}$ of valuations satisfying an \EF-formula $\psi$ is effectively
    gap definable.
\end{theorem}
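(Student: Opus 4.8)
The plan is to prove the stronger ``moreover'' statement by structural induction on the \EF-formula $\psi$, establishing that $\DEN{\psi}$ is MG-definable together with an algorithm that computes a representing finite set of monotonicity graphs. Decidability of model checking then follows immediately: to decide $\nu\models\psi$ one computes such a representation $\{M_0,\dots,M_j\}$ and checks whether $\nu\models M_i$ for some $i$. Effective gap definability is a direct consequence of Lemma~\ref{lem:mg-basics}, point 3, because a finite union $\bigcup_i \Sat(M_i)$ is precisely the denotation of the gap formula in disjunctive normal form obtained by reading each $M_i$ as a conjunction of clauses.

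For the base cases I would note that the atomic proposition $\Con{C}$ satisfies $\DEN{\Con{C}} = \Sat(\Con{C}) = \Sat(M_{\Con{C}})$ by Lemma~\ref{lem:mg-basics}, while $\DEN{{\it true}} = \Val$ is represented by the single MG with no edges. The inductive cases match one-to-one the closure properties already assembled. Negation, $\DEN{\lnot\psi} = \Val\setminus\DEN{\psi}$, is handled by Lemma~\ref{lem:neg}. For conjunction I would write the two denotations as unions $\bigcup_i \Sat(M_i)$ and $\bigcup_j \Sat(N_j)$, distribute the intersection over the unions, and invoke Lemma~\ref{lem:mg-ops}, point 2, to replace each $\Sat(M_i)\cap\Sat(N_j)$ by $\Sat(M_i\otimes N_j)$. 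For the next-step modality I would use $\DEN{\exnext{a}\psi} = \Pre{a}{\DEN{\psi}} = \bigcup_{\Con{C}:\lambda(\Con{C})=a}\Pre{\Con{C}}{\DEN{\psi}}$ and, for each such constraint and each $M_i$ in the representation of $\DEN{\psi}$, appeal to Lemma~\ref{lem:mg-ops}, point 3, to obtain $\Pre{\Con{C}}{\Sat(M_i)} = \Sat(M_{\Con{C}}\circ M_i)$; the resulting finite union over the finitely many constraints and graphs is again MG-definable. Finally, for $\OP{EF}\psi$ one has $\DEN{\OP{EF}\psi} = {\it Pre}^*(\DEN{\psi})$, and Lemma~\ref{lem:ef} delivers both MG-definability and the effective computation.

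The one point that needs care, rather than being a serious obstacle, is the control of degrees. Complementation can strictly raise the degree (Example~\ref{ex:3}), so there is no single bound $n$ valid for all subformulae at once. This is harmless: $\psi$ has only finitely many subformulae and every operation in the induction outputs a set of some finite degree, so whenever Lemma~\ref{lem:ef} is applied the argument is already $\text{MG}^n$-definable for a concrete finite $n$, and the lemma guarantees termination of the ${\it Pre}^*$ saturation at that degree. The genuinely nontrivial ingredient is therefore exactly the termination of this saturation, which rests on the well-order of Lemma~\ref{lem:mg-order} and has already been isolated in Lemma~\ref{lem:ef}; the remainder of the argument is the bookkeeping of the induction.
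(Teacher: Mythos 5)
Your proposal is correct and follows essentially the same route as the paper: a bottom-up evaluation representing each subformula's denotation by a finite set of monotonicity graphs, using Lemmas~\ref{lem:mg-ops}, \ref{lem:neg} and \ref{lem:ef} for the respective connectives, with the same key observation that negation may raise the degree but only finitely often, so each application of Lemma~\ref{lem:ef} operates at some fixed finite degree. Your write-up merely makes the induction and the final membership test explicit, which the paper leaves implicit.
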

\begin{proof}
    We can evaluate a formula bottom up, representing the sets satisfying subformulae by finite sets
    of MG. Atomic propositions are either ${\it true}$ 
    or gap clauses and can thus be written directly as
    MG. For composite formulae we use the properties that MG definable 
    sets are effectively closed under
    intersection (Lemma~\ref{lem:mg-ops}) and negation (Lemma~\ref{lem:neg}),
    and that we can compute
    representations of $\Pre{a}{S}$ and ${\it Pre}^*(S)$ for MG-definable sets $S$ by
    Lemmas~\ref{lem:mg-ops} and \ref{lem:ef}.

    The key observation is that although negation (i.e., complementing) 
    may increase the degree of the intermediate
    MG, this happens only finitely often in the bottom up evaluation of an \EF\ formula. 
    Computing representations for
    modalities $\langle a\rangle$ and $\OP{EF}$ does not increase the degree.
    \qed
\end{proof}
\begin{remark}
    Since steps in GCS are described by positive transitional gap constraints,
    it is straightforward to extend
    this positive result of Theorem~\ref{thm:EF-dec} to
    model checking GCS w.r.t.~the slightly more general logic
    $\EF_\mathcal{C}$, in which the next-state and reachability
    modalities
    $\exnext{a}_\Con{C}$ and
    $\OP{EF}_\Con{C}$ are subject to
    transitional gap clauses $\Con{C}$.
\end{remark}

The exact complexity of the EF model checking problem for GCS is still open. However,
a \PSPACE\ lower bound already holds for reachability of the simpler model of
{\em boolean programs}.
Moreover, a rather crude Ackermannian upper complexity bound on EF model
checking for GCS can be obtained by bounding
``bad sequences'' in our use of Dickson's Lemma.
In the remainder of this section we will elaborate on these reductions.

\subsection{A \PSPACE\ lower complexity bound}
Several equivalent notions of boolean programs are used in
different application domains (see e.g., \cite{CS1999,BR2000}).
Essentially, they consist of a
finite control unit that manipulates finitely many boolean variables.

We define \emph{boolean programs} as finite-state machines
with transitions of the form $s\step{g(\vec{x})/a}t$,
where $s$ and $t$ are control-states,
$a$ is an assignment $x = 0$ or $x=1$ for some variable $x$, and
$g(\vec{x})$ is a boolean formula with free variables $\vec{x}$.
The semantics of boolean programs is given by the binary step relation
$\step{}$ over pairs of control-states and variable valuations.
Let $\Var=\{x_1,x_2,\dots,x_k\}$ be the set of variables in the system
and let $\nu,\nu':\Var\to\{0,1\}$ be two valuations.
A transition $s\step{g(\vec{x})/y=b}t$ induces a step
$s,\nu \step{} t,\nu'$ if
1) $\nu\models g$, 2) $\nu'(y) = b$
and 3) $\nu'(x) = \nu(x)$ for $x\in\Var\setminus\{y\}$.

The \emph{state-reachability problem} for boolean programs asks,
for two given control states $s$ and $t$,
whether there exists a valuation $\nu$ and a
finite number of steps from $s,\nu_0$ to $t,\nu$.
Here, $\nu_0: x\mapsto 0$ assigns the value $0$ to every variable.
We will show that this problem is \PSPACE\ hard,
by reduction from the \PSPACE-complete satisfiability problem for
\emph{quantified boolean formulae} (QBF).
Notice that boolean programs can be directly simulated by gap-order constraint
systems. The same lower bound thus holds for the reachability problem, and
consequently also for the EF model checking problem for GCS.

\newcommand{\Qin}[1]{\mathit{eval}_{#1}}
\newcommand{\Qout}[1]{\mathit{out}_{#1}}
Let $Q_1x_1 Q_2x_2 \dots Q_kx_k \varphi$ be a QBF formula in prenex normal
form.
We construct a boolean program that contains control-states $\Qin{i}$ and $\Qout{i}$
as well as variables $x_i$ for each $1\le i \le k$.
The program evaluates the formula top down:
a subformula $\varphi_i = Q_ix_i, Q_{i+1}x_{i+1}\dots Q_kx_k \varphi$ is verified by a run
from control-state $\Qin{i}$ to $\Qout{i}$.
If the current valuation does not satisfy the subformula, the program deadlocks
and $\Qout{i}$ is not reachable.

The quantifier-free subformula $\varphi$
is directly evaluated using a transition $\Qin{k+1}\step{\varphi/y_{k+1}=1} \Qout{k+1}$.
Notice that a pair $\Qin{i},\nu$ is a deadlock unless $\nu\models\varphi$.
For an existential quantifier $Q_i$, there are transitions
\begin{equation*}
\Qin{i}\step{/x_i=0} \Qin{i+1},
\quad
\Qin{i}\step{/x_i=1} \Qin{i+1},
\quad
\Qout{i+1}\step{} \Qout{i}
\end{equation*}
For a universal quantifier $Q_i$, there is an extra variable $y_i$ and transitions
\begin{align*}
&\Qin{i}\step{/x_i=0} \Qin{i+1},
&&\Qin{i}\step{y_{i}=1/x_i=1} \Qin{i+1},\\
&\Qout{i+1}\step{/y_{i}=1} \Qin{i},
&&\Qout{i+1}\step{y_{i}=1/y_{i}=0} \Qout{i}.
\end{align*}
Notice that the variable $y_i$ serves as a flag to indicate that the subformula
$\varphi_i$ has been successfully verified for value $x_i=0$. 
Just observe that for any valuation $\nu$ with $\nu(y_i)=0$,
there is a path from $\Qin{i},\nu$ to some $\Qout{i},\nu'$ iff
$\nu_{[x_i=0]}\models\varphi_i$ and $\nu_{[x_i=1]}\models\varphi_i$
Moreover, the existence of such a path implies that $\nu'(y_i)=0$.
An induction on $i$ shows that the given formula is indeed satisfiable if, and only if,
there exists $\nu$ such that $\Qin{1},\nu_0\step{*}\Qout{1},\nu$.

\subsection{An Ackermann upper complexity bound}
Due to our use of Dickson's Lemma in  Lemma~\ref{lem:ef},
we can derive an Ackermannian upper bound for EF model checking using the
approach of Schmitz et.al.~\cite{FFSS2011}.
We show how to to bound the size of our representation of
${\it Pre}^*(S)$
in terms of fast-growing functions. This implies that the space required
by the procedure of Theorem~\ref{thm:EF-dec} can be bounded by
an Ackermannian function.

\newcommand{\F}[1]{F_{#1}}
\newcommand{\FS}[1]{\mathbf{F}_{#1}}
\newcommand{\FR}[1]{\mathcal{F}_{#1}}

The family of \emph{fast-growing functions} $F_n:\N\to\N$
is inductively defined as follows for all $x,k\in\N$.
\begin{align*}
    \F{0}(x) = x+1 \qquad\mbox{and}\qquad
    \F{k+1}(x) = \F{k}^{x+1}(x).
\end{align*}
A variant of the Ackermann function is $\F{\omega}:\N\to\N$,
defined as $\F{\omega}(x)=\F{x}(x)$.

Consider $d$-dimensional tuples of natural numbers with the pointwise ordering $\le$.
A sequence $x_0 x_1\dots x_l  \in (\N^d)^*$ of tuples is called \emph{good} if there exist indices $0\le i<j\le l$
such that $x_i\le x_j$ and \emph{bad} otherwise. I.e., a bad sequence is an
antichain w.r.t.\ the ordering on the tuples.
By Dickson's Lemma, every bad sequence is finite, but there exist bad sequences of
arbitrary length, because there is no assumption on the increase in
one dimension if another dimension decreases.
The \emph{norm} of $x\in\N^d$ is $\norminf{x}=\max\{x(i)\;|\;0\le i\le d\}$.
The sequence is \emph{$t$-controlled} by a function $f:\N\to\N$
if $\norminf{x_i} < f(i+t)$ for every index $0\le i\le l$.
\newcommand{\BS}[3]{L_{#1,#2}(#3)}
Let $\BS{d}{f}{t}$ denote the maximal length of a
bad sequence in $\N^d$ that is $t$-controlled by $f$.

Schmitz et.al.~\cite{FFSS2011},
show how to bound such controlled bad sequences in terms of
fast-growing functions. It follows from their work that
for every $d\ge 1$ and $c,k,x\in\N$,
\begin{equation}
\label{BS-bound}
\BS{d}{\F{k}^c}{t}
\le
\F{k+d-1}^{(c + d +2)^d}(t).
\end{equation}

We are now ready to bound the size of computed representations of
${\it Pre}^{*}(S)$ for a given MG-definable sets $S$.
Fix a GCS with variables $\Var$, constants $\Const$ and
$\delta$-many transitional gap constraints.
For the sake of readability we assume an unlabeled GCS; the bounds we provide
directly apply for the labeled case as well.
Recall that a satisfiable monotonicity graph $M$ has the property that
$M(x,y)<\infty$ for all $x,y\in\Var\cup\Const$.
We identify such a graph with the vector $v_M\in\N^d$ of
dimension
$d=|(\Var\cup\Const)^2|$, where the component for the pair $(x,y)$ has value
$0$ if $M(x,y)=-\infty$ and $n + M(x,y) +1$ otherwise.
In particular, notice that $\norminf{M}$ is bounded by
$n+1+\max\{M(x,y)\mid x,y\in\Var\cup\Const\}$.

Let $S$ be a MG-definable set of valuations represented by a single monotonicity
graph 
and
consider a branch of the tree constructed in the proof of Lemma~\ref{lem:ef}.
It provides a bad sequence $M_0M_1\dots M_l$ where for each $0<i$,
the graph $M_{i}$ 
is the result of composing its predecessor $M_{i-1}$ with one of the transitional monotonicity graphs $G$ of the
system. 
Wlog., assume that all $M_i$ are satisfiable, because otherwise it (and with it
all $M_j$ for $i\le j\le l$) represents the empty set and does not
contribute to ${\it Pre^*}(S)$.
By definition of compositions $G\circ M$ (see Definition~\ref{def:mg-ops})
we therefore get
\newcommand{\maxG}{c}
$\norminf{M_{i}} \le \norminf{M_{i-1}} +\maxG$,
for every $0< i\le l$,
where $\maxG$ is the maximal constant
in the system.
Consequently, the branch is $\norminf{M_0}$-controlled by
$f:x\mapsto x\maxG$.
Since $f$ is dominated by $\F{1}^\maxG$, 
equation~\eqref{BS-bound} provides the bound
\begin{equation}
\label{eq:}
l
~\le~ \BS{d}{f}{t}
~\le~ \BS{d}{\F{1}^\maxG}{t}
~\le~ \F{d}^{(\maxG + d +2)^d}(t)
\end{equation}
on the length of the branch,
where $t=\norminf{M_0}$.
If we instead let $t = \max\{\norminf{M_0}, (\maxG + d +2)^d + \delta \}$,
then we can bound $l$ by
$\F{d}^{(\maxG + d +2)^d}(t) \le \F{d}^{t+1}(t) = \F{d+1}(t)$.
In particular, this means that the norm $\norminf{M_l}$ is bounded by
$\maxG \cdot \F{d+1}(t)\le \F{d+2}(t)$.
Moreover, since $\F{2}^n(x) = x^n+x$, the total number of nodes in the tree is bounded
by
$\delta^l \le \F{2} \F{d+1}(t) \le \F{d+3}(t)$.
We have shown the following lemma.
\begin{lemma}
    \label{lem:prestar-rep}
    \def\B{\F{d+2}(t)}
    Let $S$ be a set of valuations represented by $m$ monotonicity graphs
    and let $t\ge (\maxG + d +2)^d + \delta$ be an upper bound on their norms.
    Then ${\it Pre}^*(S)$ can be effectively represented by
    no more than $m \cdot \F{d+3}(t)$ graphs with norm at most $\B$.
\end{lemma}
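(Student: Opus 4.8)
The plan is to turn each branch of the saturation tree of Lemma~\ref{lem:ef} into a controlled bad sequence and feed its length into the bound~\eqref{BS-bound}, then read off the two stated estimates. First I would reduce to a single graph: since ${\it Pre}^*(S\cup S')={\it Pre}^*(S)\cup{\it Pre}^*(S')$, the representation of ${\it Pre}^*(S)$ is the union of the representations built from each of the $m$ graphs of $S$, so it suffices to bound the node count and the norms of the tree grown from one monotonicity graph $M_0$ with $\norminf{M_0}\le t$ and then multiply the node count by $m$. Fix such an $M_0$ and grow the tree of Lemma~\ref{lem:ef}, whose nodes are the satisfiable graphs obtained by repeatedly composing with the $\delta$ transitional MG of the system.

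Next I would bound the length of one branch $M_0M_1\dots M_l$. It is $\mgle$-bad by construction and finite by Lemma~\ref{lem:mg-order}, and by the composition estimate recorded above each step raises the norm by at most the largest constant $\maxG$, i.e.\ $\norminf{M_i}\le\norminf{M_{i-1}}+\maxG$. Hence the branch is $t$-controlled by $f:x\mapsto x\maxG$ (using $\norminf{M_0}\le t$), and $f$ is dominated by $\F{1}^{\maxG}$. Instantiating~\eqref{BS-bound} at $k=1$ gives $l\le\BS{d}{\F{1}^{\maxG}}{t}\le\F{d}^{(\maxG+d+2)^d}(t)$. The hypothesis $t\ge(\maxG+d+2)^d+\delta$ is precisely what I need here: since $(\maxG+d+2)^d\le t+1$ and $\F{d}$ is monotone and expansive, the iteration collapses via $\F{d}^{(\maxG+d+2)^d}(t)\le\F{d}^{t+1}(t)=\F{d+1}(t)$, so every branch has length at most $\F{d+1}(t)$.

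Finally I would assemble the two bounds. Any node $M_i$ on a branch satisfies $\norminf{M_i}\le\norminf{M_0}+i\maxG\le t+\maxG\cdot\F{d+1}(t)\le\F{d+2}(t)$, giving the claimed norm bound. The tree has branching degree $\delta$ and depth at most $\F{d+1}(t)$, so it contains at most $\delta^{\F{d+1}(t)}\le\F{d+3}(t)$ nodes (using $t\ge\delta$ and the growth of $\F{2}$); summing over the $m$ input graphs yields at most $m\cdot\F{d+3}(t)$ graphs. Effectiveness is clear, since the tree together with the satisfiability and $\mgle$-covering tests that prune it are all computable.

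I expect the only real difficulty to be the fast-growing-function bookkeeping rather than any conceptual step: namely verifying the domination of $f$ by $\F{1}^{\maxG}$, the collapse $\F{d}^{(\maxG+d+2)^d}(t)\le\F{d+1}(t)$, and the count $\delta^{\F{d+1}(t)}\le\F{d+3}(t)$, each of which hinges on absorbing the constants $(\maxG+d+2)^d$ and $\delta$ into the single hypothesis $t\ge(\maxG+d+2)^d+\delta$ so that they do not resurface in the final estimates.
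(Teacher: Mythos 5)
Your proposal is correct and follows essentially the same route as the paper: each branch of the saturation tree is a bad sequence controlled by $x\mapsto x\maxG$, the bound \eqref{BS-bound} with $k=1$ gives $l\le\F{d}^{(\maxG+d+2)^d}(t)$, the hypothesis on $t$ collapses this to $\F{d+1}(t)$, and the norm and node-count estimates then follow exactly as in the paper. No substantive differences to report.
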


The claim of the next lemma directly follows from
Definitions~\ref{def:MGs}, \ref{def:mg-ops}, Lemma~\ref{lem:neg}
as well as the definition of the vector $v_M$ representing the MG $M$.
Notice that complementing a single MG $M$
results in at most $d$ graphs of degree $<\norminf{v_M}$.
Each of them therefore corresponds to a vector with norm
bounded by $2\norminf{v_M} + 1 = \F{1}(\norminf{v_M})$.
\begin{lemma}
    \label{lem:easy-reps}
    Let $S,S'$ be sets of valuations, each
    represented by a set of $m$ monotonicity graphs of
    norm at most $t$.
    Then,
    \begin{enumerate}
      \item $S\cup S'$ can be effectively represented by $m+m$ graphs
          of norm at most $t$,
      \item $S\cap S'$ can be effectively represented by $m\cdot m$
          MG of norm at most $t$,
      \item $\Pre{a}{S}$ can be effectively represented by $m$ graphs
          of norm at most $t+c$, where $c$ is the maximal absolute value
          of any constant in the system,
      \item $\Val \setminus  S$ can be effectively represented by $d^m$ graphs
          of norm at most $\F{1}(t)$.
    \end{enumerate}
\end{lemma}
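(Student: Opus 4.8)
The plan is to treat the four closure properties one at a time, in each case applying the matching operation from Definition~\ref{def:mg-ops} (together with its semantic correctness from Lemma~\ref{lem:mg-ops}) to the given representations and then simply counting the resulting graphs and bounding their norms. Throughout I use that the norm of a monotonicity graph is $\norminf{v_M}$, i.e.\ (up to the additive degree offset in the encoding of $v_M$) it is governed by the largest finite edge weight occurring in $M$; so to bound a norm it suffices to control how each operation changes edge weights.

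For the union (claim 1) nothing has to be done to the graphs: if $S=\bigcup_i \Sat(M_i)$ and $S'=\bigcup_j \Sat(M'_j)$, then $S\cup S'$ is represented by the combined family, giving $m+m$ graphs, and since the graphs themselves are unchanged their norms remain $\le t$. For the intersection (claim 2) I would distribute, using Lemma~\ref{lem:mg-ops}(2): $S\cap S' = \bigcup_{i,j}\Sat(M_i\otimes M'_j)$, which is $m\cdot m$ graphs; because $\otimes$ assigns to each pair the larger of the two corresponding weights, no new weight is created and every $M_i\otimes M'_j$ still has norm $\le t$. For the predecessor (claim 3) I would apply Lemma~\ref{lem:mg-ops}(3): composing the transitional MG $G$ of the relevant constraint with each $M_i$ yields $\Pre{a}{S}=\bigcup_i \Sat(G\circ M_i)$, i.e.\ $m$ graphs; Lemma~\ref{lem:mg-ops}(4) keeps the degree bounded, and the per-step estimate $\norminf{G\circ M_i}\le\norminf{M_i}+c$ already established in the ${\it Pre}^*$ analysis gives norm $\le t+c$.

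The one case that carries actual content is complementation (claim 4), so that is where I would spend the effort. Starting from $\Val\setminus S = \bigcap_i(\Val\setminus\Sat(M_i))$, I would first complement each single graph via Lemma~\ref{lem:neg}: negating the conjunction of clauses of $M_i$ by De Morgan turns each clause $x-y\ge k$ into $y-x\ge -k+1$, so $\Val\setminus\Sat(M_i)$ is a union of at most $d$ single-edge graphs. Distributing the outer intersection of these $m$ unions then produces at most $d^m$ graphs, each an $\otimes$-intersection of $m$ single-edge complement pieces. The delicate step is the norm bound: a complement edge of a weight-$k$ clause has weight $-k+1$, and since $k$ is bounded by the largest finite weight of $M_i$, which is $<\norminf{v_{M_i}}\le t$, each such piece has degree $<t$ and hence corresponds to a vector of norm at most $2t+1=\F{1}(t)$; as $\otimes$ only takes maxima, combining the pieces does not exceed this bound, so all $d^m$ graphs have norm $\le\F{1}(t)$. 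I expect the main obstacle to be exactly this accounting — converting the positive-weight bound on the original clauses into a degree (hence norm) bound on their negations, and checking that both the $d^m$ count obtained from distributing the intersection and the $\F{1}$ blow-up match the stated figures.
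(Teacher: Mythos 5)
Your proposal is correct and follows essentially the same route as the paper, which justifies the lemma only by a short remark that the claims follow from Definitions~\ref{def:MGs} and \ref{def:mg-ops} and Lemma~\ref{lem:neg}, with the single substantive observation being exactly your complementation accounting: negating one MG yields at most $d$ graphs of degree below its norm, hence of norm at most $\F{1}(t)$, and distributing the intersection gives $d^m$. Your expanded bookkeeping for the other three cases matches what the paper leaves implicit.
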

\begin{proposition}
    Let $(\Var,\Const,\Act,\Delta,\lambda)$ be a gap-order constraint system
    and let $d=(|\Var| +|\Const|)^2$,  $\delta=|\Delta|$ and $c=\max\{|x| : x\in\Const\}$.
    For every EF-formula $\varphi$ of nesting depth $k$, one can effectively
    compute a representation of the set $\DEN{\varphi}$,
    in space $\F{d+4}((c+d+2)^d+\delta + k)$.
\end{proposition}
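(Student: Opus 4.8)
The plan is to run the bottom-up evaluation of Theorem~\ref{thm:EF-dec}, keeping at every subformula a representation as a finite set of monotonicity graphs, and to bound \emph{simultaneously} the number of graphs and their norm by induction on the nesting depth. I would set $t_0 = (c+d+2)^d+\delta$, which is exactly the threshold required by the hypothesis of Lemma~\ref{lem:prestar-rep}, and arrange that every intermediate norm stays $\ge t_0$, so that this lemma always applies. The base case (atomic clauses and $\mathit{true}$) is a single graph of norm at most $t_0$. For the inductive step I read off the per-operation blow-ups: by Lemma~\ref{lem:easy-reps}, conjunction squares the count and keeps the norm, $\exnext{a}$ leaves the count and adds $c$ to the norm, and negation raises the count to $d^{m}$ while sending norm $t$ to $\F{1}(t)=2t+1$; by Lemma~\ref{lem:prestar-rep}, $\OP{EF}$ (i.e.\ ${\it Pre}^*$) multiplies the count by $\F{d+3}(t)$ and sends the norm to $\F{d+2}(t)$.

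First I would collapse all of this into a single quantity $\beta_j$ bounding both the number of graphs and their norm for subformulae of nesting depth $j$, with $\beta_0 = t_0$. Checking each operation against the candidate recurrence $\beta_{j+1}\le \F{d+3}(\beta_j+1)$, the dominant case is $\OP{EF}$, where the count is at most $\beta_j\cdot\F{d+3}(\beta_j)\le \F{d+3}(\beta_j)^2\le \F{d+3}(\beta_j+1)$ and the norm at most $\F{d+2}(\beta_j)\le\F{d+3}(\beta_j+1)$; the cheaper operations are absorbed since $\F{d+3}$ dominates the exponential and quadratic growth coming from negation and conjunction. The collapse inequalities I rely on here are $\F{n}(x)^2\le\F{n}(x+1)$ and $d^{x}\le\F{n}(x)$ for $n\ge 3$, both valid at $n=d+3$.

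Next I would solve the recurrence. A straightforward induction, using only that $\F{d+3}$ is strictly increasing and inflationary (so $\F{d+3}^{\,j}(x)+1\le\F{d+3}^{\,j}(x+1)$), gives $\beta_j\le \F{d+3}^{\,j}(t_0+j)$. Folding the iterate into one step of the next function via $\F{n}^{\,j}(x)\le\F{n+1}(x+j)$ then yields $\beta_k\le\F{d+3}^{\,k}(t_0+k)\le\F{d+4}(t_0+k)$, since $k\le t_0+k+1$. Hence both the number of graphs and each graph's norm are bounded by $\F{d+4}((c+d+2)^d+\delta+k)$. The working space is polynomial in these quantities — essentially the number of graphs times $O(d\log(\text{norm}))$ bits per graph — and such polynomial and additive-constant overheads are absorbed into one further step of $\F{d+4}$, giving the stated bound.

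The hard part is not conceptual but the fast-growing-function bookkeeping: one must verify the handful of collapse inequalities above and apply them in the right order, in particular being careful that folding the $k$-fold iterate is legitimate only because $k$ is small relative to $t_0+k$. The point I would take care to isolate is \emph{why} the space bound sits one level higher ($d{+}4$) than the norm bound ($d{+}3$): the count multiplier contributed by each $\OP{EF}$ is $\F{d+3}$ evaluated at a norm that is itself an $\F{d+2}$-iterate, so collapsing this nested growth along the deepest branch of the formula costs exactly one extra level of the hierarchy.
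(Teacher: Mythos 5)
Your proposal is correct and follows essentially the same route as the paper: a bottom-up induction on nesting depth using Lemma~\ref{lem:easy-reps} for the boolean and next-step connectives and Lemma~\ref{lem:prestar-rep} for $\OP{EF}$, with the final fold of the $k$-fold iterate into one application of $\F{d+4}$. The only (harmless) presentational difference is that you merge the graph count and the norm into a single invariant $\beta_j\le\F{d+3}^{\,j}(t_0+j)$, where the paper tracks them separately as $\F{d+3}^k(t)$ and $\F{d+2}^k(t)$.
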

\begin{proof}
    Let $t=(c + d+2)^d+\delta)$.
    We show by induction on the nesting depth $k$ of subformulae
    that
    $\DEN{\varphi}$ can be represented by at most
    $\F{d+3}^k(t)$ monotonicity graphs with norms bounded by
    $\F{d+2}^k(t)$.
    
    For the base case, observe that atomic propositions are either $\varphi={\it true}$
    or stated as single gap constraint $\varphi=\Con{C}$.
    Either way, $\DEN{\varphi}$ can be expressed as single monotonicity graph
    $M_\varphi$ with norm $\norminf{M_\varphi}\le c\le \F{d+2}^0((c +
    d+2)^d+\delta)$.

    For the induction step, we assume that 
    the claim is true for all formulae of height $i$ and consider
    a formula $\varphi$ of height $i+1$. If the principal connector of $\varphi$
    is $\land,\lor,\neg$ or $\exnext{a}$ for some action $a$,
    then the claim follows by Lemma~\ref{lem:easy-reps}.
    To see this, just notice that for all $m\in\N$,
    $\F{d+2}(m)\ge \F{1}(m)= 2m+1$ and
    $\F{d+3}(m)\ge \F{3}(m)\ge m^m$.
    If $\varphi$ is of the form ${\it EF}\phi$, then,
    by induction hypothesis,
    $\DEN{\phi}$ can be effectively represented by
    no more than $\F{d+3}^i(t)$ graphs with norm $\le \F{2}^i(t)$.
    Lemma~\ref{lem:prestar-rep} thus implies that
    $\DEN{\varphi}$ is representable by $\F{d+3}^{i+1}(t)$ graphs,
    each with norm at most $\F{d+2}^{i+1}(t)$ as required.

    The claim now follows from the observation that the total
    space required for the above representation is
    $d\cdot \log \F{d+2}^k(t) \cdot \F{d+3}^k(t) \le \F{d+4}(t+k)$.
\end{proof}

    \section{Applications}
    \label{sec:app}
We consider labeled transition systems induced by GCS.
In a weak semantics, one abstracts from non-observable actions
modeled by a dedicated action $\tau\in \Act$.
The \emph{weak step} relation $\wstep{}$ is defined by
\[
\begin{array}{lcl}
\wstep{\tau} & = & \Step{\tau}{*}{} \\
\wstep{a}    & = & \Step{\tau}{*}{}\cdot\step{a}\cdot\Step{\tau}{*}{}, \quad\mbox{for $a\neq\tau$}
\end{array}
\]

Bisimulation and weak bisimulation
\cite{Par1981,Mil1989}
are semantic equivalences in van Glabbeek's
linear time -- branching time spectrum \cite{Gla2001}, which are used to compare the behavior of processes.
Their standard co-inductive definition relative to a given LTS is as follows.
\begin{definition}
  A binary relation $R\subseteq V^2$ on the states of a labeled transition
  system is a \emph{bisimulation} if $sRt$ implies that 
  \begin{enumerate}
    \item for all $s\step{a}s'$ there is a $t'$ such that $t\step{a}t'$ and $s'Rt'$, and
    \item for all $t\step{a}t'$ there is a $s'$ such that $s\step{a}s'$ and $s'Rt'$. 
  \end{enumerate}
  Similarly, $R$ is a \emph{weak bisimulation} if in both conditions above, $\step{}$ is replaced by
  $\wstep{}$.
  (Weak) bisimulations are closed under union, so there exist
  unique maximal bisimulation $\mbsim$ and weak bisimulation $\mwbsim$ relations, which
  are equivalences on $V$.

  By the maximal (weak) bisimulation between two LTS with state sets $S$ and $T$
  we mean the maximal (weak) bisimulation in their union projected into $(S\x T)\cup (T\x S)$.
\end{definition}
The \emph{Equivalence Checking Problem} is the following decision problem.

\vspace{0.3cm}
\begin{tabular}{ll}
  {\sc Input:}  &Given LTS $T_1=(V_1,\Act,\step{})$ and $T_2=(V_2,\Act,\step{})$,\\
                &states $s\in V_1$ and $t\in V_2$ and an equivalence $R$.\\
  {\sc Question:} &$s R t$?
\end{tabular}
\vspace{0.3cm}

\noindent In particular, we are interested in checking strong and weak bisimulation between
LTS induced by GCS and finite systems.
Note that the decidability of weak bisimulation implies the decidability of the corresponding strong
bisimulation because $\mbsim$ and $\mwbsim$ coincide for LTS without $\tau$ labels.

Finite systems admit \emph{characteristic formulae} up to
weak bisimulation in \EF\  (see e.g.~\cite{KJ2006,JKM1998}).
\begin{theorem}
    \label{thm:char-formulae}
    Let $T_1=(V_1,\Act,\step{})$ be an LTS with finite state set $V_1$
    and $T_2=(V_2,\Act,\step{})$ be an arbitrary LTS.
    For every state $s\in V_1$ one can construct an \EF-formula $\psi_s$
    such that $t\mwbsim s \iff t\models \psi_s$ for all states $t\in V_2$.
\end{theorem}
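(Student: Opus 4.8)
The plan is to produce, for each state $s\in V_1$, a \emph{characteristic formula} that captures exactly the $\mwbsim$-class of $s$. First I would compute the weak bisimulation $\mwbsim$ on the \emph{finite} system $T_1$ and collect its finitely many equivalence classes $K_1,\dots,K_m$. I would then attach one formula $\Phi_{K}$ to each class, with the intended meaning that, for every $t\in V_2$,
$t\models\Phi_K \iff t\mwbsim s$ for some (equivalently, every) $s\in K$,
and finally set $\psi_s:=\Phi_{[s]}$. Because there are only finitely many classes and finitely many actions, all conjunctions and disjunctions occurring below are finite, which is what ultimately keeps $\psi_s$ a genuine \EF\ formula.

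Each $\Phi_K$ should assert two dual things. A \emph{forth} part forces $t$ to reproduce every weak move available to $K$-states: for each $a\in\Act$ and each class $K'$ with $K\xRightarrow{a}K'$, the state $t$ must be able to perform a weak $a$-step into a $\Phi_{K'}$-state. A \emph{back} part constrains every weak move of $t$ to stay inside the classes $K$ can reach, i.e.\ $\llbracket a\rrbracket\bigvee_{K\,\Rightarrow^{a}\,K'}\Phi_{K'}$ for each $a$. To phrase these in \EF\ I would treat the weak diamond $\langle\langle a\rangle\rangle$ and weak box $\llbracket a\rrbracket$ as \emph{derived} operators: the weak box is $\lnot\langle\langle a\rangle\rangle\lnot(\cdot)$, using that \EF-definable sets are closed under negation, and the weak diamond decomposes into silent reachability pre- and post-composed with the strong next $\exnext{a}$ that the logic already provides. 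The one genuinely delicate ingredient is expressing $\tau$-closed reachability, since \EF\ offers only the \emph{unrestricted} reachability operator $\OP{EF}$.

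Next I would turn the mutually recursive system $\{\Phi_{K_1},\dots,\Phi_{K_m}\}$ into honest, non-recursive finite \EF\ formulae by bounded unfolding. This is exactly the point where finiteness of $V_1$ is indispensable: the number of $\mwbsim$-distinctions among the classes is bounded by $m\le|V_1|$, so the greatest-fixpoint characterization stabilizes after at most $|V_1|$ rounds and a correspondingly bounded unfolding (with base case $\OP{EF}^{(0)}=\mathit{true}$) yields finite formulae. I would invoke the standard characteristic-formula construction for finite processes \cite{KJ2006,JKM1998} here, or alternatively prove stabilization directly by observing that within $m$ rounds the finite challenger must revisit a class, so a looping Duplicator strategy can be read off the formula.

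Correctness then splits into two inclusions. Soundness, $s\mwbsim t\Rightarrow t\models\psi_s$, follows by induction on the unfolding depth directly from the definition of $\mwbsim$. For completeness I would show that the relation $R=\{(s,t)\mid t\models\Phi_{[s]}\}$ is a weak bisimulation: the forth-part diamonds answer every challenge of $s$, while the back-part boxes confine each response of $t$ to the correct class. I expect the \textbf{main obstacle} to be the faithful treatment of silent steps inside \EF: weak transitions require $\tau$-\emph{restricted} reachability, whereas $\OP{EF}$ reaches along arbitrary labels. The plan to overcome this is to exploit the interlock between the forth diamonds and the back boxes — the box constraints pin down \emph{every} reachable state, so that on any $t$ satisfying $\Phi_{[s]}$ the unrestricted reachability cannot ``cheat'' through a visible action and in fact coincides with $\tau$-reachability exactly where the matching argument uses it. Establishing this interplay, together with the stabilization bound, is the technical heart of the proof.
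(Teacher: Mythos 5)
The paper gives no proof of this theorem at all --- it is quoted as a known result with a pointer to the characteristic-formula constructions of \cite{KJ2006,JKM1998} --- so your sketch has to be judged against that standard construction, and it diverges from it at the one point where the construction is genuinely subtle. Your termination mechanism is the claim that ``the greatest-fixpoint characterization stabilizes after at most $|V_1|$ rounds'', so that an $m$-fold unfolding of the mutually recursive $\Phi_K$ yields correct finite formulae. This is false. Finiteness of $V_1$ bounds the stabilization of the partition-refinement on $V_1\times V_1$, not of the approximants on $V_2\times V_1$. Concretely, take $T_1$ with a single state $f$ and loop $f\step{a}f$, and let $T_2$ contain, for each $k$, a chain $u_k\step{a}u_{k-1}\step{a}\cdots\step{a}u_0$ with $u_0$ deadlocked: then $u_k\mwbsim_k f$ but $u_k\not\mwbsim_{k+1} f$, so no finite unfolding depth depending only on $|V_1|$ suffices, and since GCS are infinitely branching one cannot appeal to the $\omega$-limit either. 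Consequently your completeness argument breaks: the relation $R=\{(s,t)\mid t\models\Phi_{[s]}\}$ built from an $m$-fold unfolding only certifies $\mwbsim_m$, and is not a weak bisimulation. The missing idea --- and the reason the target logic is \EF\ rather than Hennessy--Milner logic --- is the global consistency trick of \cite{JKM1998}: one writes the depth-$(n)$ and depth-$(n{+}1)$ approximant formulae $\chi_{g,n}$, $\chi_{g,n+1}$ (finite because $T_1$ is finite-branching on the relevant side), and then uses the reachability modality, under a negation, to assert that \emph{every} state reachable from $t$ classifies identically at levels $n$ and $n{+}1$; this stability condition is what turns the finite-level approximant into the exact equivalence.

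The second gap is the treatment of silent steps, which you rightly flag as delicate but then resolve by an ``interlock'' argument that does not hold up. The forth conjuncts need genuine weak diamonds $\step{\tau}^*\!\cdot\step{a}\cdot\step{\tau}^*$, and the back boxes cannot repair an unrestricted $\OP{EF}$ here: if $T_1$ has $f\step{a}f_1$ and $f\step{b}f$, and $t$ has only $b$-moves, say $t\step{b}u\step{a}v$ with $u\mwbsim f$ and $v\mwbsim f_1$, then $t$ has no weak $a$-step at all (so $t\not\mwbsim f$), yet ``$\OP{EF}$ then $\exnext{a}$ then $\OP{EF}$'' is satisfied through $u$, and the back boxes are vacuous or satisfied. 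So $\tau$-restricted reachability is not derivable from the grammar of \eqref{eq:grammar}; the construction really lives in \EF\ enriched with the weak modalities $\wstep{a}$ (equivalently, with $\tau$-restricted $\OP{EF}$). This is harmless for the paper's purposes --- the procedure of Theorem~\ref{thm:EF-dec} computes $\mathit{Pre}^*_\tau$ by simply restricting the tree construction of Lemma~\ref{lem:ef} to $\tau$-labelled transitional MGs, in the spirit of the remark on $\EF_{\mathcal{C}}$ --- but it has to be said explicitly rather than argued away.
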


The following is a direct consequence of Theorems~\ref{thm:char-formulae} and \ref{thm:EF-dec}.

\begin{theorem}
\label{thm:wbsim-ecp}
    For every GCS $\G=(Var,Const,\Act,\Delta,\lambda)$
    and every LTS $T=(V,\Act,\step{})$ with finite state set $V,$
    the maximal weak bisimulation $\mwbsim$
    between $T_{\G}$ and $T$ is effectively gap definable.
\end{theorem}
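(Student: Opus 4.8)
The plan is to read this off as an immediate corollary of the two preceding theorems. First I would instantiate Theorem~\ref{thm:char-formulae} with $T_1 := T$ (which is finite by hypothesis) and $T_2 := T_{\G}$ (the infinite LTS induced by the GCS). This yields, for each of the finitely many states $s \in V$, an \EF-formula $\psi_s$ whose denotation over $T_{\G}$ captures exactly the states weakly bisimilar to $s$; that is, for every valuation $\nu \in \Val$ we have $\nu \mwbsim s \iff \nu \models \psi_s$.

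Next I would invoke Theorem~\ref{thm:EF-dec}, which guarantees that $\DEN{\psi_s}$ is effectively gap definable: for each $s \in V$ one can compute a gap formula $\varphi_s$ with $\Sat(\varphi_s) = \DEN{\psi_s} = \{\nu \in \Val \mid \nu \mwbsim s\}$. Thus every \emph{slice} of the relation $\mwbsim$, namely the set of $\G$-valuations weakly bisimilar to a fixed finite state $s$, is gap definable, and a representation of it is computable.

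It remains to assemble these slices into a representation of the whole relation. Since $\mwbsim$ is, by definition, the maximal weak bisimulation in the union of the two systems projected into $(\Val \x V) \cup (V \x \Val)$, and since $V$ is finite, the relation is completely determined by the finite family $\{\varphi_s\}_{s \in V}$: a pair $(\nu, s)$, or its mirror $(s,\nu)$, lies in $\mwbsim$ iff $\nu \models \varphi_s$. This is precisely the sense in which the relation is \emph{effectively gap definable}, that is, finitely many gap formulae, one per finite state, describing the valuation component. I would spell out this reading indexed by $V$ explicitly, since gap definability was originally introduced for sets of valuations rather than for a relation that also ranges over finite states.

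I do not anticipate a genuine obstacle: the work is entirely carried by Theorems~\ref{thm:char-formulae} and \ref{thm:EF-dec}. The only points requiring care are the bookkeeping just mentioned, fixing the correct orientation when applying the characteristic-formula construction (the finite side must play the role of $T_1$) and making precise what gap definability means over this mixed state space. As a closing remark I would note that the strong-bisimulation analogue follows for free, since $\mbsim$ and $\mwbsim$ coincide on LTS without $\tau$ labels, as already observed before the theorem.
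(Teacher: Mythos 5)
Your proposal is correct and follows essentially the same route as the paper: apply Theorem~\ref{thm:char-formulae} to obtain characteristic \EF-formulae $\psi_s$ for the finitely many states $s\in V$, then use Theorem~\ref{thm:EF-dec} to make each $\DEN{\psi_s}$ effectively gap definable, and assemble the finitely many slices. The only (cosmetic) difference is that the paper packages the result as the union $\mwbsim\ =\bigcup_{s\in V}\DEN{\psi_s}$ and invokes closure of gap definable sets under finite union, whereas you keep the family indexed by $s$ — your reading is arguably the more careful one, since the relation lives on the mixed state space $(\Val\x V)\cup(V\x\Val)$.
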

\begin{proof}
    By Theorem~\ref{thm:char-formulae} we can compute, for every state $s$ of $T,$
    a characteristic formula $\psi_s$ that characterizes the set of
    valuations $\{\nu\ |\ \nu\mwbsim s\} = \DEN{\psi_s}$.
    By Theorem~\ref{thm:EF-dec}, these sets are MG- and thus gap definable.
    Since the class of gap definable sets is effectively closed under
    finite union
    and $\mwbsim\ = \bigcup_{s\in V}\DEN{\psi_s}$, the result follows.
    \qed
\end{proof}

Considering that gap formulae are particular formulae of Presburger arithmetic,
we know that gap definable sets have a decidable membership problem.
Theorem~\ref{thm:wbsim-ecp} thus implies the decidability of
equivalence checking between GCS processes and finite systems w.r.t.\ strong and weak bisimulation.

    \section{Conclusion and Open Questions}
We have shown that model checking gap-order constraint systems with
the logic \EG\ is undecidable while the problem remains decidable
for the logic \EF.
An immediate consequence of the latter result is the
decidability of
strong and weak bisimulation checking between GCS and finite systems.

The decidability of \EF\ model checking 
is shown by using finite sets of monotonicity graphs
or equivalently, gap formulae to represent intermediate results in a bottom-up
evaluation.
This works because the class of arbitrary gap definable sets
is effectively closed under union and complements
and for a gap definable set $S$ and a GCS $\G$, $\Pre{}{S}$ and ${\it Pre}^*(S)$ are
effectively gap definable.

Our decidability result relies on a well-quasi-ordering argument
to ensure termination of the fixpoint computation for ${\it Pre}^*(S)$,
which does not yield any strong upper complexity bound.
So far, there is only an Ackermannian upper bound and a \PSPACE\ lower bound. 

Interesting open questions include
determining the exact complexity of model checking GCS with respect to
\EF.
We also plan to investigate the decidability and complexity of checking behavioral equivalences
like strong and weak bisimulation between two GCS processes,
as well as checking (weak) simulation preorder and trace inclusion.

    \newpage
    \bibliographystyle{plain}
    \bibliography{bibliography}

\end{document}